
\documentclass[12pt]{article}
\usepackage{authblk}
\usepackage{amssymb}
\usepackage{amsmath}
\usepackage{mathrsfs}
\usepackage{amsfonts}
\usepackage{txfonts}
\usepackage{braket}
\usepackage{color}
\usepackage{graphicx}
\usepackage{tikz,pgffor}
\usepackage{rotating}
\usetikzlibrary{calc, shapes, matrix,snakes,positioning,backgrounds,arrows,automata,backgrounds,fit,decorations.pathreplacing}
\usetikzlibrary{shadows}
\usepackage{smartdiagram}
\usepackage{tkz-graph}
\GraphInit[vstyle = Shade]
\usepackage[latin1]{inputenc}
\usepackage{pgf}
%
\newtheorem{thm}{Theorem}

\newtheorem{lem}[thm]{Lemma}
\newtheorem{prop}[thm]{Proposition}
\newtheorem{exmp}[thm]{Example}
\newtheorem{defn}[thm]{Definition}
\newcommand{\norm}[1]{\left\Vert#1\right\Vert}
\newcommand{\abs}[1]{\left\vert#1\right\vert}

\def\XXint#1#2#3{{\setbox0=\hbox{$#1{#2#3}{\int}$ }
\vcenter{\hbox{$#2#3$ }}\kern-.6125\wd0}}


\topmargin 0.0cm
\oddsidemargin 0.2cm
\textwidth 16cm 
\textheight 21cm
\footskip 1.0cm



\newcounter{lastnote}


\title{Formal Verification using Second-Quantized Horn Clauses}
\author{Radhakrishnan Balu}
\affil{Army Research Laboratory Adelphi, MD, 21005-5069, USA \\
       radhakrishnan.balu.civ@mail.mil}            
\date{Received: date / Accepted: \today}
\begin{document}
\maketitle

\begin{abstract}
In our previous work \cite {RB2015} we described quantized computation using Horn clauses and based the semantics, dubbed as entanglement semantics as a generalization of denotational and distribution semantics, and founded it on quantum probability by exploiting the key insight classical random variables have quantum decompositions. Towards this end we built a Hilbert space of H-interpretations and a corresponding non commutative von Neumann algebra of bounded linear operators \cite {RB2016}. In this work we extend the formalism using second-quantized Horn clauses that describe processes such as Heisenberg evolutions in optical circuits, quantum walks, and quantum filters in a formally verifiable way. We base our system on a measure theoretic approach to handle infinite dimensional systems and demonstrate the expressive power of the formalism by casting an algebra used to describe interconnected quantum systems (QNET) \cite {Gough2008} in this language. The variables of a Horn clause bounded by universal or existential quantifiers can be used to describe parameters of optical components such as beam splitter scattering paths, cavity detuning from resonance, strength of a laser beam, or input and output ports of these components. Prominent clauses in this non commutative framework are Weyl predicates, that are  operators on a Boson Fock space in the language of quantum stochastic calculus, martingales and conjugate Brownian motions compactly representing statistics of quantum field fluctuations. We formulate theorem proving as a quantum stochastic process in Heisenberg picture of quantum mechanics, a sequence of goals to be proved, using backward chaining.
\end{abstract}
\section{Introduction}
Categorification of quantum mechanics in \cite {Abramsky2004} and \cite {Abramsky2008} paved the way for expressing quantum computation and protocols amenable to visual verification. Rigorous certification of quantum information is an active area of research as in formal verification of quantum processes \cite {Deng2012}, topological quantum computation \cite {vicary2015}, quantum lambda calculus based approaches \cite {Sellinger2006}, and quantum prolog where predicates are quantized using the annealing paradigm \cite {Scott2018}. Formal methods also play a critical role in decidability of algorithms, either classical or quantum, as exemplified by the tools of partial recursive functions \cite {Svozil2010}. Recently, Schulman and Schreiber integrated mathematical logic and quantum field theory \cite {Urs2012} using cohesive Homotopy type theory. Here we develop predicates for infinite dimensional quantum systems to express dynamical aspects of quantum fields such as creation, annihilation, and counting processes that can be related to theorem proving in mathematical logic using the language of higher categories \cite {Beiz}. The relationship between these seemingly disparate notions is established via dagger compact symmetric monoidal categories. Intuitionistic or linear logic can be used to construct a symmetric monoidal category \cite {Girard} and by basing it on a Hilbert space we can build a compact symmetric monoidal category.  By realizing this system on a symmetric Fock space we can construct a dagger compact symmetric monoidal category. 

Our framework is based on Horn clauses and the theorem proving is a backward chaining similar to prolog we have to use linear logic to treat proofs and refutations in equal footing which is also consistent with the reversible evolution of quantum processes. A proof of the  Horn clause, we use "lollipop" for implication operator, $R\multimapinv Q \boxtimes P$ \cite{Shulman2018} consists of either one of converting the proofs of P and Q into that of R,  proof of P and refutation of R into refutation of Q, proof of P and refutation of Q into refutation of R. The refutation of the Horn clause consists of proofs of P and Q and refutation of R. An equivalent Horn clause is $(R\multimapinv Q )\multimapinv P$ as we use multiplicative conjunction for the antecedents. For the generic Horn clause $R\multimapinv (P_1 \boxtimes P_2 \boxtimes \dots \boxtimes P_n)$ a proof consists of all possible direct or contrapositive proofs that contradict one of the antecedents. Contradicting a predicate P is same as developing a proof for P that leads to absurdity. The theorem prover will start with the negation of the goal to be established and try to prove the antecedents until an absurdity is generated.  

In graphical calculus based on higher categories integers encoding classical states form objects, states of Hilbert spaces are 1-morphisms (wires), and unitary operators at the vertices cascade evolution as 2-morphisms.  In our framework, the members of Herbrand universe form the objects, quantum predicates are the morphisms, and the stochastic processes form morphisms on morphisms consistent with the hierarchies of the 2-Hilb category. That is, the morphisms are not invertible resulting in a groupoid rather they are conjugates, adjoints, and second-quantized adjoints as in higher Hilb categories. In this spirit we will categorify standard quantum logics by introducing the category StdLog and enrich it with structures. We build a 2-Chu space out of it where the refutations are quantum stochastic processes in Heisenberg picture and the proofs are evolutions in Schr$\ddot{o}dinger$ picture. We could iterate Chu space construction, on unitaries that encode algorithms, forming a hierarchy of spaces of arbitrary hight with duality built in. The Fock space can also be used to model open systems and processes constructed out of this space are equivalent to manipulation of morphisms in a graphical calculus. It is easy to see the correspondence as a Fock space is built by countable direct sum of finite tensor product of Hilbert spaces. Accordingly, we will decorate our predicates as $\overset{0}{p}$ for classical predicates (deterministic or randomized) and use the decoration $\overset{1}{p}$  for quantum operators, $\overset{2}{p}$  for unitary and anti-unitary operators that evolve quantum states, and $\overset{3}{p}$  for second quantized predicates, these are the Weyl predicates 2-morphisms cascaded vertically along the time axis. The intent is to view n-decorations as morphisms between spaces where (n-1)-decorations inhabit stratifying the space where proofs flow. These probability predicates represent statistical ensembles form spaces, not necessarily smooth, that are models of the logic. In such a view proofs and refutations are continuous paths on these synthetic topological spaces. In our framework antecedents of Horn clauses can support braiding as $(\overset{1}{p} \multimapinv \overset{1}{p}\otimes\overset{1}{r}) = (\overset{1}{p} \multimapinv \overset{1}{r}\otimes\overset{1}{p})$ and the compactness arise as $(\overset{1}{p} \multimapinv \overset{1}{q}) = (\overset{1}{p} \otimes \overset{1}{q}^\dag \multimapinv)$. Our focus in this work is the logical language aspects to express solutions based on second quantization and we provide several well known examples in this formalism, incorporating compact symmetric monoidal category informally and postponing a formal semantics based on 2-Category theory \cite {Vicary2008} and relativistic extensions \cite{BaluRel2018}, systems of imprimitivity based on induced representations of groups generalized to categories, to future work along with the description of probabilistic predicates as 3-morphism. An outline of such a program would start with the identification of the Euclidean group of $\mathscr{H}\times\mathscr{U}(\mathscr{H})$ for an arbitrary Hilbert space $\mathscr{H}$ as a connected Lie group \cite {KP1992}. Members of this group can be used to build Weyl unitaries of the quantum stochastic calculus. Induced representations from the subgroup of this Euclidean group can be expressed in the language of category theory using the underlying fiber bundle structure. With our identification of Weyl predicates to the corresponding second quantized operators we can build categories for quantum Horn clauses describing fields. We have taken liberties with the notation at some places dropping the decorations,  using comma, $\boxtimes$, and $\otimes$ interchangeably and $\multimapinv$ for $\multimapinv$ when the context is revealing.

We present our formalism in the context of quantum logic by identifying the non Boolean lattice underlying the language to construct quantum stochastic processes as proofs. In this instance violation of distributive property results in a non Boolean logic. It is important to note that there are difficulties in defining the tensor products of lattices and logics \cite{RF81}. Consequently, we shall consider only the standard logics, that is the lattices and logics defined on separable Hilbert spaces, in our discussion. We resort to quantum probability defined on separable Hilbert spaces in describing composite systems to avoid the issues related to tensor products. The separability is required for the Gleason{'}s theorem to hold which is at the heart of quantum probability. The lattice operations, join and meet of the well formed formula (WFF), are defined in terms of set union and intersections of H-interpretations and inclusion in sets, rigorous definitions are provided later, would correspond to logical implication enabling the definition of Horn clauses. In quantum probabilistic logic  stochastic realizations are absent as the grounding of predicates occur only at the time of quantum measurements. We find that quantum probability (QP) and first order quantum logic (FOQL) are related formalisms to express high level computation, with QP suitable for describing time evolutions, and FOQL provides language constructs for computations as a sequence of operators acting upon a quantum state. Quantum probability space is a composition of several classical probability spaces that could support both continuous and discrete measures simultaneously. Quantum mechanical measurements will result in a single classical space consisting of compatible, that is simultaneously observable, operators. In the quantum logic perspective the system will be described as a non Boolean lattice  and measurements will resolve it into a single logic among several possible classical Boolean logics. We refer the readers to \cite{KP1992} and \cite{PA1995} for an excellent introduction to quantum probability and more advanced treatment of the subject respectively.The interaction between classical and quantum systems can be expressed in this framework either at the lattices level or as an embedded process within a quantum process. This formalism may not be as intuitive as the ones based on pictorial reasoning but can deal with infinite-steps processes as our framework seamlessly integrates with measure spaces.

The deductive process in this system of logic can be visualized as a sequence of predicates  to be proved starting from an initial one and identifying a target predicate. We need a symmetric Fock space to form a sequence of observables corresponding to the chain of predicates forming the deduction. Exploiting the true randomness of microscopic systems and the fact that any classical random variable can be decomposed into quantum operators \cite{OB} we build observables on a Fock space to represent predicates. We express a quantum walk in this framework and trace the sequence of quantum predicates to be proved, as part of theorem proving, as a discrete-time quantum stochastic process. With experimental realization of quantum walks on different platforms our approach will lay the foundation for at least a class of Markovian theorem provers. 

\begin{exmp}
Transfer of quantum states via teleportation is an important step in many communication protocols and heralded entanglement \cite{MUNROE} is one of the well known implementations. Let us consider a slightly more complex example of Heralded entanglement of matter qubits or solid state (SS) systems $sQ_A$ and $sQ_B$ separated in space and correlated by photonic qubits $pQ_A$ and $pQ_B$ (flying qubits). Qubits entangled at a distance are an important ingredient in achieving distributed computation \cite{HERALD}.The steps involved are to prepare a pair of superposition of spin states (SS) separated by a distance, apply Hadamard gate to entangle spin and photon at each site, coincidence measurement via a beam splitter (bs) of the two flying qubits to project on to SS qubits that will finally entangle the remote SS qubits. Designing predicates and Horn clauses is an art as it encodes algorithmic aspects in a logical programming language with very few constructs that becomes subtler in the quantum context. Here, the core steps of the protocol are given, the asterisks indicates a measurement step,  and after a formal introduction to quantum probability let us discuss another protocol that involves conditional quantum states. 
\begin {align*}
 \overset{2}{e}ntangle(sQ, pQ) & \multimapinv\text{  } \overset{2}{h}adamard(sQ, pQ)\boxtimes (state(sQ) \overset{1}{=} \psi), (state(pQ) \overset{1}{=} \phi).\\
\overset{2}{c}oincide(sQ_A,sQ_B,pQ_A,pQ_B) & \multimapinv\text{   }\overset{2}{e}ntangle(sQ_A,pQ_A)\boxtimes\text{    }\overset{2}{e}ntangle(sQ_B,pQ_B)\boxtimes\\
                    & \overset{1}{b}s(sQ_A,sQ_B,pQ_A,pQ_B)\boxtimes(state(sQ) \overset{1}{=} \psi_A)\boxtimes (state(pQ) \overset{1}{=} \phi_A)\boxtimes \\
                    & (state(sQ) \overset{1}{=} \psi_B)\boxtimes (state(pQ) \overset{1}{=} \phi_B).\\
\overset{2}{h}erald(sQ_A,sQ_B,pQ_A,pQ_B) & \multimapinv \text{  }\overset{2}{c}oincide(sQ_A,sQ_B,pQ_A,pQ_B)\boxtimes\\
                   & (state(\Psi) \overset{1}{=} 1/\sqrt{2}(\ket{1_{A}0_B}+\ket{0_{A}1_B}))\boxtimes \overset{2}{p}roject(sQ_A,sQ_B,pQ_A,pQ_B,\Psi)^*.
\end {align*}
Initial quantum state:  $1/\sqrt{2}(\ket{\uparrow}+\ket{\downarrow})\otimes{1/\sqrt{2}}(\ket{\uparrow}+\ket{\downarrow})$. So, we can start the theorem proving with the predicate on two Nitrogen Valence centers as $herald(nv_1,nv_2)$, $state(nv_1) = 1/\sqrt{2}(\ket{\uparrow}+\ket{\downarrow})$, $state(nv_2) = {1/\sqrt{2}}(\ket{\uparrow}+\ket{\downarrow})$) that will result in the solid state qubits entangled.\\
\end {exmp}

\section {Formalism: StdLog Category}

We have reviewed the usual notions of quantum logic in the appendix and here we provide the categorification of the same. The objects of the category StdLog are standard logics, that is orthomodular lattices corresponding to separable Hilbert spaces. It is essentially the category of Hilbert spaces Hilb but with an emphasis on the underlying lattice structure. Linear maps between them form the morphisms. The zero object is the lattice corresponding to the zero dimensional Hilbert space, the coproduct addition, the cokernel subtraction are all similar to the Hilb category. StdLog is also compatible with the multiplicative structure and the categorification of the inner product is the hom functor. Analogously, in the 2-StdLog category the dual morphisms (adjoint operators) are present via the adjoint functors $F^* :D\rightarrow C$ and $F:C\rightarrow D$ when there is a natural isomorphism $hom(Fc, d) \cong hom(c, F^* d)$ 

\begin{defn} A StdLog-category is a category C such that any pair of objects $x,y\in C$ the set of mophisms hom(x, y) is equipped with the structure of a standard logic, and for any objects $x, y, z \in C$ the composition map 
$o: hom(x, y) \times hom(y, z) \rightarrow hom (x, z)$ is bilinear.
\end {defn}
This category is endowed with a covariant functor $\bar{  }: StdLog \rightarrow StdLog$
\begin{defn} A $S^*$-category is a StdLog-category with an involution that defines an antinatural transformation from hom(x, y) to $\overline{hom(y, x)}$. 
\end {defn}
\begin{defn} Let x and y be objetcs of a *-category. A morphism $u:x \rightarrow y$ is unitary if $uu^* = 1_x$ and $u^* u = 1_y$. A morphism $a:x \rightarrow x$ is self-adjoint if $a^* = a$.
\end {defn}
StdLog is a symmetric monoidal category with $\mathbb{C}$ as the unit object with the tensor product of the separable Hilbert space. Let us make this a *-category by defining an antinatural transformation fro hom(x, y) to $\overline{hom(y, x)}$ and it can be shown to be antiunitary. In this *-category every isomorphism is defined by a unitary.
\begin {defn} A quantum observable Obs is a functor from the lattice of  Borel sets $\mathscr{B}(\mathbb{R})$ of the real line $\mathbb{R}$ into StgLog.
$Obs:\mathscr{B}(\mathbb{R}) \rightarrow StdLog$ such that the morphisms State(x) with $x\in Obs$ is a probability measure and in additin the morphism x preserves the lattice structure.
\end {defn} 
\begin {defn} Given the set of all observables and a measure a quantum state State is an adjoint functor to Obs which is a morphism from the lattice of  StgLog into Borel sets $\mathscr{B}(\mathbb{R})$ of the real line $\mathbb{R}$.
$State:StdLog \rightarrow \mathscr{B}(\mathbb{R})$ such that the morphisms State(o) with $o\in Obs$ are probability measures that map elements of StdLog into that of the Borel lattice. Let us denote this adjunction by the symbol $\iota$
\end {defn} 
Let us build a Chu construction $Chu(StdLog, \iota)$ as follows:
\begin {lem} The triples $(\text{Obs, State, and the relation }\iota)$ form the Chu space $Chu(StdLog, \iota)$ for the standard logics.
\end {lem}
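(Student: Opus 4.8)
The plan is to unwind the definition of the Chu construction over a symmetric monoidal closed category and verify each ingredient against the structure already established for $\mathrm{StdLog}$. Recall that an object of $Chu(\mathcal{C}, d)$ for a dualizing object $d$ is a triple $(A, X, r)$ with $A, X$ objects of $\mathcal{C}$ and $r$ a morphism $A \otimes X \to d$ (equivalently, by closedness, an arrow $A \to [X, d]$), and that a morphism $(A,X,r) \to (B,Y,s)$ is an adjoint pair $(f : A \to B,\ g : Y \to X)$ satisfying $s \circ (f \otimes 1_Y) = r \circ (1_A \otimes g)$. So first I would recall from the preceding paragraphs that $\mathrm{StdLog}$ is symmetric monoidal with unit $\mathbb{C}$, that it is an $S^*$-category in which the $\mathrm{hom}$-functor categorifies the inner product — hence closed, with internal hom the enriched $\mathrm{hom}$ — and that the pairing role is played by $\iota$, the adjunction $\mathrm{Obs} \dashv \mathrm{State}$ witnessed by the natural isomorphism $\mathrm{hom}(\mathrm{Obs}(E), L) \cong \mathrm{hom}(E, \mathrm{State}(L))$ for $E \in \mathscr{B}(\mathbb{R})$ and $L \in \mathrm{StdLog}$.

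Second, I would identify the three data of the triple: take $A = \mathrm{Obs}$ (the observable-functors, viewed as ``points''), $X = \mathrm{State}$ (the adjoint functors, viewed as ``states''), and $r = \iota$ the relation $o \mathrel{\iota} s$ holding exactly when $\mathrm{State}(s)$ evaluated on the observable $o$ returns the prescribed probability measure, i.e. the counit $\varepsilon$ of the adjunction read as a morphism $\mathrm{Obs} \otimes \mathrm{State} \to \iota$ into the dualizing object. The key points to verify are (i) that $r$ is well defined as a morphism in $\mathrm{StdLog}$ — this is precisely the clause in the definitions of $\mathrm{Obs}$ and $\mathrm{State}$ that the maps preserve the lattice structure while assigning probability measures — and (ii) that $r$ is balanced, i.e. natural in each argument: naturality in the observable argument says a Borel map pulls observables back compatibly, naturality in the state argument says states are affine on the projection lattice, both encoded in the functoriality built into the two definitions.

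Third, I would check the Chu separation/extensionality conditions: distinct observables are separated by states and distinct states by observables, i.e. $o = o'$ iff $\mathrm{State}(s)(o) = \mathrm{State}(s)(o')$ for every $s$, and dually. This is where Gleason's theorem enters — on a separable Hilbert space the probability measures on the projection lattice biject with density operators, so the family of all states is rich enough to separate observables, while the spectral measures of observables separate states. Restricting to standard logics (orthomodular lattices of separable Hilbert spaces) is exactly what licenses this step, as flagged earlier. Once the triple-plus-relation data and the extensionality conditions are in place, $(\mathrm{Obs}, \mathrm{State}, \iota)$ is an object of $Chu(\mathrm{StdLog}, \iota)$; that the assignment extends to a category with morphisms the adjoint pairs $(f,g)$ then follows from the adjoint-functor structure noted for $2$-$\mathrm{StdLog}$, since any pair of adjoint functors automatically fills the Chu compatibility square.

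The main obstacle I anticipate is not the formal Chu bookkeeping but promoting the pairing $\iota$ to a genuine morphism of $\mathrm{StdLog}$ rather than a mere family of set maps: one must show that ``$\mathrm{State}(s)$ applied to $o$ is a probability measure'' assembles coherently with joins, meets and orthocomplements, i.e. that the counit of $\mathrm{Obs} \dashv \mathrm{State}$ lives in $\mathrm{StdLog}$ and not just in $\mathrm{Set}$. Doing this cleanly needs the measure-theoretic setup for infinite-dimensional systems promised in the introduction, together with an invocation of Gleason's theorem (hence separability) at exactly the right place; the remainder is a diagram chase.
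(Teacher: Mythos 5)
The paper offers no proof of this lemma to compare against: the statement appears immediately after the definitions of Obs, State and the adjunction $\iota$, and is treated as holding essentially by construction --- the triple is simply declared to be the Chu space $Chu(StdLog,\iota)$. Your proposal therefore supplies strictly more detail than the source does. Most of what you add is consistent with the paper's evident intent: reading the pairing through the adjunction $\mathrm{hom}(\mathrm{Obs}(E),L)\cong \mathrm{hom}(E,\mathrm{State}(L))$, taking observables as the points and states as the dual carrier, and locating the role of separability and Gleason's theorem in guaranteeing that states evaluated on observables are genuine probability measures.

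Two cautions. First, the separation/extensionality check in your third step is not part of the definition of a Chu space --- it characterizes the separated and extensional subcategories --- so while invoking Gleason's theorem there is in the spirit of the paper (and, strictly, requires $\dim\mathscr{H}\ge 3$), it is not needed for the claim as stated. Second, and more substantively, the identification $A=\mathrm{Obs}$, $X=\mathrm{State}$ does not typecheck in the categorical construction $Chu(\mathcal{C},d)$ that you recall: Obs and State are functors between $\mathscr{B}(\mathbb{R})$ and StdLog, not objects of StdLog, and $\iota$ occupies the dualizing-object slot in the notation $Chu(StdLog,\iota)$ while being defined in the paper as an adjunction. You flag the related difficulty of promoting $\iota$ to a morphism of StdLog as your main obstacle but do not resolve it; the paper does not resolve it either, so this is an ambiguity you have inherited rather than created. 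As a reconstruction of an unproved assertion your outline is reasonable, but a complete proof would first have to fix precisely in which category the triple is claimed to live (a Chu space over $\mathbf{Set}$ with the evaluation pairing, versus an object of the enriched construction over StdLog), and only then can the remaining verification be reduced to the diagram chase you describe.
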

Now we can use Piron's theorem to construct a 2-Chu space and so let us recollect the result.
\begin {thm} Piron \cite {Piron1964} Let $\mathscr{L}$ be any logic. Then, a necessary and sufficient condition that $\mathscr{L}$ is isomorphic to the logic of all closed linear manifolds of a separable Hilbert space is that $\mathscr{L}$ be a projective logic and have the property that every family of mutually orthogonal points of $\mathscr{L}$ be at most countable.
\end {thm}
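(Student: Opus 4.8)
The plan is to prove the two implications separately, the easy one being necessity and the substantial one being sufficiency. For \emph{necessity}, take $\mathscr{L} = \mathscr{L}(H)$, the lattice of closed linear manifolds of a separable Hilbert space $H$ ordered by inclusion with orthocomplementation $M \mapsto M^{\perp}$. I would check directly that this is a projective logic: completeness is clear (arbitrary meets are intersections, joins are closed spans), the atoms are the rays $\mathbb{C}v$, the covering law holds because $M \vee \mathbb{C}v$ has dimension one more than $M$ whenever $v \notin M$, and the orthomodular identity $M \le N \Rightarrow N = M \vee (N \wedge M^{\perp})$ is precisely the projection theorem. Finally, a family of mutually orthogonal points of $\mathscr{L}(H)$ is an orthonormal system in $H$, which is at most countable exactly because $H$ is separable. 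This direction is essentially bookkeeping.

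For \emph{sufficiency}, let $\mathscr{L}$ be a projective logic all of whose orthogonal families of points are countable. First I would decompose $\mathscr{L}$ along its center into irreducible summands; a direct sum of subspace-logics of Hilbert spaces is the subspace-logic of the Hilbert-space direct sum, and the countability hypothesis both passes to summands and forces the index set of the decomposition to be countable, so it suffices to treat the irreducible case. The low-dimensional cases ($\dim \mathscr{L} \le 3$, which give small projective planes or Boolean algebras) are dispatched by hand, so assume $\mathscr{L}$ is irreducible of dimension at least $4$. Forgetting the orthocomplementation, $\mathscr{L}$ is then a complete, irreducible, atomistic lattice with the covering property, so by the lattice form of the fundamental theorem of projective geometry (von Neumann, Birkhoff, Frink) it is isomorphic to the lattice of all biorthogonally closed subspaces of a left vector space $V$ over a division ring $K$.

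Next I would transport the orthocomplementation back onto $V$, where it becomes a polarity; by the Birkhoff--von Neumann analysis of orthocomplemented projective geometries it is induced by a nondegenerate Hermitian form $\langle\cdot,\cdot\rangle : V \times V \to K$ relative to an involutory antiautomorphism $*$ of $K$, and the elements of $\mathscr{L}$ are then exactly the $\langle\cdot,\cdot\rangle$-closed subspaces. The orthomodular law is the decisive extra hypothesis: it is equivalent to the statement that every closed $M$ splits as $M \oplus M^{\perp} = V$, which in turn forces the form to be anisotropic (definite), so $(V,\langle\cdot,\cdot\rangle)$ is a generalized pre-Hilbert space. The countability hypothesis now says a maximal orthogonal set of atoms is countable, i.e. $V$ carries a countable orthogonal generating system; after rescaling, $\mathscr{L}$ is the lattice of closed subspaces of a \emph{separable} inner-product space over $K$.

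The hard step — and the point at which Piron's original argument had to be supplemented — is pinning down $K$ and the completeness of the space. With an infinite orthonormal sequence in hand and the form orthomodular, I would invoke Sol\`{e}r's theorem to conclude that $K$ is $\mathbb{R}$, $\mathbb{C}$, or the quaternions $\mathbb{H}$ with its standard involution; then by the Amemiya--Araki theorem the orthomodularity of $\mathscr{L}$ already forces $(V,\langle\cdot,\cdot\rangle)$ to be complete, so $V$ is a genuine separable Hilbert space and $\mathscr{L} \cong \mathscr{L}(V)$, which closes the proof. The two places I expect real resistance are this division-ring identification (which is genuinely deep and not self-contained) and, more mundanely, carrying out the central decomposition in the infinite-dimensional reduction carefully enough that the countability-of-orthogonal-families hypothesis is never lost.
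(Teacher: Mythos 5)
The paper itself does not prove this statement: it is quoted as an imported result with only the citation to Piron's thesis, and its role is just to justify the 2-Chu construction, so there is no internal proof to compare your route against and your attempt has to stand on its own. As an outline it is the standard modern path (coordinatization of an irreducible projective logic of dimension at least four, transport of the orthocomplementation to a Hermitian form in the Birkhoff--von Neumann style, orthomodularity as the splitting condition, then Sol\`{e}r plus Amemiya--Araki), and the necessity half is indeed routine. But the sufficiency half has two genuine gaps. First, the reduction to the irreducible case rests on your assertion that a direct sum of subspace logics is the subspace logic of the Hilbert-space direct sum; that is false. $\mathscr{L}(H_1\oplus H_2)$ is irreducible (its center is $\{0,H_1\oplus H_2\}$), whereas the product lattice $\mathscr{L}(H_1)\times\mathscr{L}(H_2)$ has nontrivial center and contains none of the ``diagonal'' closed subspaces, so a logic with nontrivial center is never isomorphic to the logic of closed linear manifolds of a single Hilbert space. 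You must either take irreducibility as part of the definition of a projective logic (as Piron and Varadarajan effectively do) or derive triviality of the center from the hypotheses; you cannot decompose and reassemble.

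Second, Sol\`{e}r's theorem requires the existence of an infinite orthonormal sequence, while the hypothesis ``every family of mutually orthogonal points is at most countable'' is only an upper bound. Your argument therefore says nothing about an irreducible projective logic of finite rank at least four, and there the conclusion genuinely fails without further axioms: the lattice of all subspaces of $\mathbb{Q}^4$ with the standard anisotropic form is complete, atomistic, satisfies the covering law, is orthomodular, and trivially satisfies the countability condition, yet it has only countably many atoms and so is not isomorphic to the subspace lattice of any real, complex or quaternionic Hilbert space. To close this you must either read ``separable Hilbert space'' as infinite dimensional (as the paper's own definition of a standard logic does) and then add or extract the existence of an infinite orthogonal family of points before invoking Sol\`{e}r, or treat the finite-rank case separately under extra hypotheses. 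Two smaller remarks: once Sol\`{e}r applies, completeness of the space is part of its conclusion, so Amemiya--Araki is not needed as a separate step; and none of this field-identification machinery existed in 1964, which is precisely why Piron's own argument --- the one the paper cites --- delivers only a generalized Hilbert space over an involutive division ring rather than the classical separable Hilbert space asserted in the statement.
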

\begin {lem}
3-PrjSpc is a 3-category that is equal to 2-StdLog up to an isomorphism. $Chu(PrjSpc, \bot)$ is a 2-Chu space.
\end {lem}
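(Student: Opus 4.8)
The plan is to build the claimed isomorphism one categorical level at a time, and then to exhibit the orthocomplementation $\bot$ as the dualizing datum that promotes the earlier Chu construction.

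First, at the level of objects: Piron's theorem says that an abstract logic $\mathscr{L}$ is the lattice of closed linear manifolds of a separable Hilbert space exactly when it is a projective logic all of whose families of mutually orthogonal points are countable. Reading this correspondence contravariantly, between projective geometries and their coordinatizing spaces, identifies the objects of PrjSpc (the projective logics of Piron's theorem) with the objects of StdLog (orthomodular lattices attached to separable Hilbert spaces). For $1$-cells I would invoke the fundamental theorem of projective geometry to pass from collineations, i.e.\ semilinear maps modulo scalars, to the bounded maps of StdLog whose action on closed subspaces respects join and meet; since StdLog is essentially Hilb with the lattice structure emphasized, this is a bijection on hom-sets, not merely an equivalence. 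The adjoint pairs $F \dashv F^{*}$ that give $2$-StdLog its $S^{*}$-structure correspond on the projective side to the orthogonality-preserving dualities, so the $2$-cells match, and the remaining cells of $3$-PrjSpc are then forced to be exactly what $2$-StdLog carries between its adjunctions (only identities survive at the top level, which is why the stated ``$3$-category versus $2$-category'' comparison is harmless); one checks that horizontal and vertical composition and the interchange law are transported across. ``Equal up to isomorphism'' should be read as this levelwise, structure-preserving bijection on cells.

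Second, for the Chu clause: we already have $Chu(\mathrm{StdLog},\iota)$ with Obs and State dual under $\iota$. On the projective side the canonical biextension is the orthogonality relation $\bot$ between a closed subspace and its orthocomplement --- equivalently between points and hyperplanes --- and, being an order-reversing involution, it is precisely a dualizing object for a Chu construction. I would then verify the Chu axioms: $\bot$ is separating on each side, because by the orthomodular law distinct closed subspaces have distinct orthocomplements; and the Chu morphisms are the adjoint pairs $(f,f^{*})$ satisfying $\bot(fx,y) = \bot(x,f^{*}y)$, which under the isomorphism of the first part are exactly the morphisms of $Chu(\mathrm{StdLog},\iota)$. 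Carrying the construction out with the adjunction $2$-cells of $2$-StdLog present turns the output into a $2$-Chu space rather than an ordinary one, with proofs flowing in the Schr\"odinger direction and refutations in the Heisenberg (adjoint) direction, as anticipated in the introduction.

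The main obstacle is not the bookkeeping but the coordinatization gap inside Piron's theorem: Piron only produces a generalized Hilbert space over some division ring with involution, and pinning the scalar field down to $\mathbb{R}$, $\mathbb{C}$ or $\mathbb{H}$ --- so that StdLog really is the category of separable Hilbert spaces --- needs an extra ingredient, namely Sol\`er's theorem applied with the countable-orthogonal-family hypothesis (or an outright restriction to the complex case). A secondary difficulty is making the fundamental theorem of projective geometry yield enough $1$-cells, and only those, to get an isomorphism rather than an equivalence: one must use irreducibility of the logics and handle the non-invertible conjugate and adjoint maps, which lie outside the classical collineation picture, by working with the $\dagger$-enriched form of the theorem.
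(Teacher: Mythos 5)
The paper does not actually prove this lemma: it is stated bare, immediately after recalling Piron's theorem, and the only justification offered is the surrounding prose (``Now we can use Piron's theorem to construct a 2-Chu space''), together with the earlier, equally unproved lemma that $(\mathrm{Obs},\mathrm{State},\iota)$ forms $Chu(\mathrm{StdLog},\iota)$. Your proposal therefore follows the same route the paper gestures at --- Piron's theorem to identify projective logics with the countable-orthogonality property with lattices of closed subspaces of separable Hilbert spaces, then the orthocomplement $\bot$ as the dualizing relation promoting the earlier Chu construction --- but it supplies far more argument than the paper contains, so there is no written proof for it to diverge from; what you have written is a plausible reconstruction of the intended argument.

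Two cautions about the added detail. First, your claim that the fundamental theorem of projective geometry yields a bijection on hom-sets, ``not merely an equivalence,'' is the weakest step: collineations correspond to semilinear maps only modulo scalars, Wigner-type arguments recover unitaries and antiunitaries only up to phase, and the non-invertible morphisms of StdLog (arbitrary bounded linear maps between different spaces) are not covered by the classical collineation correspondence at all; you flag this at the end but do not resolve it, and as it stands ``equal up to an isomorphism'' can honestly be argued only as an equivalence (or dagger-equivalence), which is arguably all the lemma's loose phrasing requires. Second, since the paper never defines the 2- and 3-cells of 3-PrjSpc, nor those of 2-StdLog beyond the adjoint-functor remark, your assertion that the top-dimensional cells are ``forced'' to match cannot be checked; the defensible statement is that whatever higher cells one equips the projective side with are transported along the object/1-cell correspondence. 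Your observation about the coordinatization gap (Piron's 1964 theorem produces a generalized Hilbert space over a division ring with involution, and pinning the scalars to $\mathbb{R}$, $\mathbb{C}$ or $\mathbb{H}$ needs Sol\`er-type input) is correct and worth recording, though note the paper sidesteps it by quoting Piron's theorem already in the strong ``separable Hilbert space'' form, so within the paper's own framing that strengthening is assumed rather than proved.
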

Let us state the salient features of the formalism here:

The logical formulae are expressed as Horn clauses with one atomic predicate for the consequent and multiple predicates connected by conjunction for antecedent: $q\multimapinv q_1\otimes q_2 \otimes \dots \otimes p_n$.

Predicates are decorated as $\overset{0}{p}$ for classical predicates (deterministic or randomized) and the decoration $\overset{1}{p}$  for quantum operators, $\overset{2}{p}$  for unitary operators that evolve quantum states, and $\overset{3}{p}$  for second quantized predicates (Weyl predicates). 

 Antecedents of Horn clauses can support braiding as $(\overset{1}{p} \multimapinv \overset{1}{p}\otimes\overset{1}{r}) = (\overset{1}{p} \multimapinv \overset{1}{r}\otimes\overset{1}{p})$ and the compactness of the category is due to the identity $(\overset{1}{p} \multimapinv \overset{1}{q}) = (\overset{1}{p} \otimes \overset{1}{q}^\dag \multimapinv)$.

Though $\overset{1}{p}, \overset{2}{p},$ and $\overset{3}{p}$ are morphisms our casting them as predicates can be viewed as "internal homs" of the category.

A proof is a martingale based on the observed predicates and time reversibility is supported by backward and forward chaining of the proofs.

Our framework is constructive in nature that doesn{'}t support law of excluded middle and therefore any resulting proof or model can be interpreted as computation.

Implications $\multimapinv$ are the only 2-morphisms as others are internal homs so that implication of proofs and refutations are compactly represented as in a constructive logic.

Our proofs respect the duality of $Schr\ddot{o}dinger$ and Heisenberg evolutions and can be cast in either pictures owing to the Chu construction.

We treat proofs and refutations with equal footing which is consistent with the time reversibility of quantum processes and the connectives of linear logic are used to accomplish this requirement.
\

\begin {exmp} Measurement of incompatible observables through a probe.

This example from \cite{LUC2006} quantum filtering work describes a way to measure interfering observables by copying it to a probe observable on a different Hilbert space. This method doesn't violate the no cloning theorem as the probe reproduces just the statistics of the original observable on a different Hilbert space enabling the simultaneous measurement of the two.

The system, probe, and the composite system can be described by the quantum probability space \\
$(\mathscr{N}=\mathscr{C}^n\otimes{\mathscr{N}_p}=\mathscr{C}^m,\mathbb{P}(X)=Tr\{\rho{X}\}\otimes{\mathbb{P}_p})$ with p as a pre determined eigen value and \\
$A=\sum_{a\in{spec(A)}}$ is an observable with m eigen values.We want to copy A into A{'} so that the unitary evolution results in same statistics for the two such as $(A\otimes{I})=(U*(I\otimes{A{'}})U)$.\\

Let us define a projection $P{'}_a = \Psi_a\Psi^*_a$ for each eigen value a of the observable A to be probed.  The unitary evolution that will give us the same statistics for both the observables is
$U=\sum_{a\in{spec(A)}}P_a\otimes{X{'}_{ap}}$ where $X{'}_{ab}$ is given by:
\begin {equation}
X{'}_{ab} = \Psi_b\Psi^*_a+\Psi_b\Psi^*_a+\sum_{c\neq{a,b}}\Psi_c\Psi^*_c;X{'}_{aa}=I \\
\end {equation}
To see that the unitary interaction copies the observable A to probe A{'}, we see that
\begin {equation}
U^*(I\otimes{P{'}_c})U = P_c\otimes{P{'}_p}+ (1 - P_c)\otimes{P{'}_c} \text{  if  } (c \neq{p}) \\
\end {equation}
and when c=p $U^*(I\otimes{P{'}_c})U =\sum_a{P_a\otimes{P{'}_a}}$ and it is easy to verify that U is an unitary operator. Finally, we have
\begin {equation}
\frac{(\mathscr{P}\otimes{\mathscr{P}_p})(U^*(I\otimes{P{'}_c})U)(P_c\otimes{I})}{(\mathscr{P}\otimes{\mathscr{P}_p})(P_c\otimes{I})} = 1, \forall{c}\\
\end {equation} 
To see that, when $c\neq{p}$ the numerators in the state $\mathscr{P}_c\otimes{\mathscr{P}{'}_p}$ is $(P_c\otimes{P{'}_p})(P_c\otimes{I})$ as the other terms are zero. This gives the ratio of 1 and for the case $c = {p}$  the numerator has $\sum_a{P_a}\otimes{P{'}_p} = I\otimes{P{'}_p}$ which evaluates to 1 in this state. Therefore, the ratio is 1 $\forall{c}$.

What we have shown is that the conditional probability that the unitary evolution gives an outcome c given that we have observed the same value for $A\otimes{I}$ is 1 and so the unitary
has faithfully copied the observable A into A{'}.\\

The corresponding Horn clauses can be set up as, here we have decorated only the unitaries to simplify notations:
\begin {align*}
\overset{2}{U}^*(I\otimes{P{'}_c})\overset{2}{U} &= P_c\otimes{P{'}_p}+(1 - P_c)\otimes{P{'}_c}.\\
\text{probe(B,A) } &\multimapinv\text{ }[\overset{2}{U}^*(I\otimes{A'})\overset{2}{U}, \overset{2}{U}^*(B\otimes{I})\overset{2}{U}] = 0.\\
P[\overset{2}{U}^*(B\otimes{I})\overset{2}{U}|A]&\multimapinv{probe(B,A{'})},\sum_i\frac{P_p(BA_i)}{P_p(A_i)}\overset{2}{U}^*(I\otimes{A_i})\overset{2}{U},\text{spec(A) }=\{A_i\}.\\
\end {align*}
with the initial quantum state: $(P\otimes{P_p};P_p(X)=tr\{XA_{p{'}}\};p{'}=spec(A{'})).$\\

Now let us extend this to three variables probe.
\begin {align*}
(A\otimes{I}\otimes{I}) & \Leftrightarrow(\overset{2}{U}^*(I\otimes{A{'}}\otimes{I})\overset{2}{U}).\\
(B\otimes{I}\otimes{I}) & \Leftrightarrow(\overset{2}{U}^*(I\otimes{I}\otimes{B{'}})\overset{2}{U}).\\
\text{probe(X,Y{'}) } & \multimapinv\text{ }[\overset{2}{U}^*(I\otimes{Y{'}}\otimes{I})\overset{2}{U}, \overset{2}{U}^*(X\otimes{I}\otimes{I})\overset{2}{U}] = 0.\\
\end {align*}
When we instantiate X=B, Y=A and X=C, Y=B; we get the following Horn clauses: 
\begin {align*}
\text{probe(B,A) } & \multimapinv\text{ }[\overset{2}{U}^*(I\otimes{A{'}}\otimes{I})\overset{2}{U}, \overset{2}{U}^*(B\otimes{I}\otimes{I})\overset{2}{U}] = 0.\\
\text{probe(C,B) } & \multimapinv\text{ }[\overset{2}{U}^*(I\otimes{I}\otimes{B{'}})\overset{2}{U}, \overset{2}{U}^*(C\otimes{I}\otimes{I})\overset{2}{U}] = 0.
\end {align*}
with the quantum state: $(P\otimes{P_p}\otimes{P_q})$.\\

Now, if we probe the observable A with B and then measure A we will get the same answer as measuring A before the probe. However, if we probe A and B in succession and measure A with C=A as a probe for B then we will get different answer. This is easy to see as
probe C disturbs A and so subsequent measurement will give a different answer for A. We can establish this very formally using the following Horn clauses:
\begin {align*}
\text{probe(B,A) } & \multimapinv\text{ }[\overset{2}{U}^*(I\otimes{A{'}}\otimes{I})\overset{2}{U}, \overset{2}{U}^*(B\otimes{I}\otimes{I})\overset{2}{U}]= 0.\\
\text{probe(A,B) } & \multimapinv\text{ }[\overset{2}{U}^*(I\otimes{I}\otimes{B{'}})\overset{2}{U}, \overset{2}{U}^*(A\otimes{I}\otimes{I})\overset{2}{U}]= 0.
\end {align*} 

When the theorem prover tries to satisfy the goal probe(C,B) proving the antecedents of the Horn clause would involve the unitary evolution  $ U^*(A\otimes{I}\otimes{I})U$ and thus ends up in disturbing the observable A.
\end {exmp}
 
 \section {Quantum walks and Theorem provers}
Quantum walks are a flexible framework for deriving algorithms relevant in quantum information processing \cite {Rad2018}, \cite {Rad2018a}, \cite {Rad2017a}, \cite {Rad2017b}and we refer the readers to the review by Salvador \cite{SV} for a comprehensive background on this topic. Walks are defined using maps that are *-homomorphisms and in the case of time-discrete evolutions they are defined using a coin operator Hilbert space in addition to the walker space. In this perspective a quantum walk is an open systems with the coin degrees of freedom acting as the bath for the walker. We will work within the open quantum formalism by realizing the effect of the coin through the operators driving the discrete quantum noises (the three martingales). We recommend the works on discrete processes \cite{KP1992, RBQA2016} for the readers to gain understanding in rigorous establishment of their existence and their applications respectively. Some essential features are provided in the Appendix for a quick reference.

Let us start with the definition of a *-homomorphism $\Theta$ required to induce the open system (quantum walk) that when divided into structure maps that have the interpretation of Lindbladian operators in the time-continuous limit of these processes. We identify the coin Hilbert space $\mathscr{H}_c$ as the d-dimensional complex space $\mathscr{H}_w = \mathscr{C}^d$ with the canonical basis \{$e_0,e_1,...,e_{d-1}$\} and the space of the walker as the set of Natural numbers $\mathscr{Z}$. 
\begin {align*}
\Theta:\mathbb{B}_0 &\rightarrow{\mathbb{B}(\mathscr{H}_w)}\otimes{\mathbb{B}(\mathscr{H}_c)}\\
\Theta &= U_c^\dagger\Lambda_w{U_c}
\end{align*}
In the above $\Lambda_w$ is a d-dimensional matrix containing information about the branching along different subspaces (or dimensions) based on the outcome of the coin. For example, if the walker takes a step in the positive direction for the coin outcome $e_0$ and the opposite direction for the other outcome then the matrix $\Lambda_w$ wil have the dimension 2x2. The structure maps \begin {align*}
\theta^j_i(X) = \mathscr{E}_{\ket{i}\bra{j}}[\Theta(X)]\text{, where  } X\in{\mathbb{B}(\mathscr{H}_w)}
\end {align*}
are projections on different subspaces of the coin conditioned upon the outcome. The map $\theta^0_0$ corresponds to coherent evolution, the maps $\theta^j_j$ drive the number martingale which in the continuos limit would become the gauge process and the rest of the form $\theta_0^j$ control the other two martingales. In the case of Hadamard walk the structure maps are
\begin {align*}
\mathbb{H}_c &=  \frac{1}{\sqrt{2}}\left( \begin{array}{cc}
1 & 1 \\
1 & -1  \end{array} \right).\\
\Lambda_w &=\ket{0}\bra{0}L^{+}+ \ket{1}\bra{1}L^{-},L^{\pm}\ket{\psi(x,n)}=\ket{\psi(\pm{x},n)}.\\
\Theta &= \Lambda_{w}o(\mathbb{I}\otimes{H_c}), \text{   o is the composition of operators}.\\
\theta^0_0 &= L^+ \text{, and  }\\
\theta^1_1 &= L^-.
\end {align*}

With this set up the existence of a discrete-time quantum stochastic process, a flow in Heisenberg picture, that is Markovian can be established \cite{KP1992}.
 There exists an operator valued process $J_n:B_0\rightarrow{B_{n]}}$ satisfying 
\begin{align*}
j_0(X) &= X\otimes{1_{[1}}.\\
j_1(X) &=\Theta(X)\otimes{1_{[2}}.\\
j_n(X) &= \sum_{0\leq{i}\leq{d-1}}j_{n-1}(\theta_i^i(X))1_{n-1]}\otimes{\ket{e_i}\bra{e_i}}\otimes{1_{[n+1}}.\\
a_n &= 1_{n-1]}\otimes{\ket{e_0}\bra{e_1}}\otimes{1_{[n+1}}.\\
a^{\dag}_n &= 1_{n-1]}\otimes{\ket{e_1}\bra{e_0}}\otimes{1_{[n+1}}.\\
j_n(X) &= j_{n-1}(Xo\Psi_R + Xo\Psi_L ))+j_{n-1}(Xo\Psi_R - Xo\Psi_L)(a_n^{\dag}a_n)\\
A_n &= a_1+...+a_n \\
A^{\dag}_n &= a^{\dag}_1+...+a^{\dag}_n \\
\Lambda_n &= a^\dagger_1+...+a^\dagger_n\\
j_n(X) &= L^+{\Psi_R}_{n-1} + L^-{\Psi_L}_{n-1}+L^+{\Psi_R}_{n-1} - L^-{\Psi_L}_{n-1}\\
\end{align*}

in the state $\rho_0\otimes{\ket{e_0}\bra{e_0}}\otimes{\ket{e_0}\bra{e_0}}\otimes\dots$, where the three noise martingales, $A_n, A^{\dag}_n, \text{ and } \Lambda_n$ evolve can be described in position space resulting in an unitary evolution below:\\
\begin {align*}
\psi_R(x,n+1) &= \psi_R(x-1,n) + \psi_L(x-1,n)\\
\psi_L(x,n+1) &= \psi_R(x+1,n) - \psi_L(x+1,n)
\end {align*}
Now, we can set up the Horn clauses to describe the Hadamard quantum walk as follows:
\begin {align*}
\overset{3}{Q}walkR(x, n+1) &\multimapinv \overset{3}{Q}walkR(x-1, n), \overset{3}{Q}walkL(x-1, n),\overset{3}{A}_n(\Omega),\overset{3}{\Lambda}_n(\Omega). \\ 
\overset{3}{Q}walkL(x, n+1) & \multimapinv \overset{3}{Q}walkR(x-1, n),  \overset{3}{Q}walkL(x-1, n),\overset{3}{A}_n(\Omega),\overset{3}{\Lambda}_n(\Omega).
\end {align*}

We described the quantum walk on a Hilbert space that is a tensor product of countable sequence of identical spaces one associates with each time instant of the evolution. We have to ensure that this space, $\mathscr{H}=\otimes_{n=0}^{\infty}\mathscr{H}_n$ embeds a quantum logic that is standard. The emphasis on standard logic is due to the importance of automorphism on logics in physics \cite {HWeyl} and not the underlying symmetries. First, we observe that infinite tensor product may lead to a Hilbert space that is not separable. We outline the essential arguments provided in Varadarajan`s book \cite {VAR} to establish that a standard logic can be carved out of symmetric Fock space and refer the reader to the book for complete details.

\begin {thm} Evolution of process $j_n$ is an automorphism on a standard logic. \\
\begin {proof}
For symmetric Fock space we require that the self adjoint operators representing observables commute with actions of permutation group $S_N$ of order N when acted upon $\mathscr{H}^N=\mathscr{H}_1\otimes\dots\mathscr{H}_N$. This space can be written as a direct sum of mutually orthogonal subspaces using the following projections:
\begin {align}
P_{\tau} &= (N!\sqrt{dim(\tau)})^{-1}  \sum_{s\in{S_N}}\tau(s)s, \text{ where } \tau \text{is any irreducible character of } S_N\\
\mathscr{H}^N(\tau) &= P_{\tau} \mathscr{H}^N \\
\mathscr{H}^N &= \oplus_\tau \mathscr{H}^N(\tau)
\end {align}
The logic of individual $\mathscr{H}^N(\tau)$ is a standard one. When the above decomposition is countably infinite \cite {VAR} we can conclude that the logic of $\mathscr{H}$ is a standard one.
Let $e_0$ and $e_1$ denote the canonical basis of the Hilbert space $\mathscr{C}^2$. An orthonormal basis of $(\mathscr{C}^2)^{\otimes{n}}$ is given by the vectors $e_{U}$ where $U\subset{\{1,2,\dots,n\}}$ and $e_U=e_{i_1}\otimes\dots\otimes{e_{i_n}}$.
In this basis the operators $A_n,A^{\dag},\Lambda_n$ at as \\
\begin {align*}
A_n{e_U} &= \sum_{k\ni{U}}e_U\cup{\{k\}} \\
A^{\dag}_n{e_U} &= \sum_{k\ni{U}}e_U\setminus{\{k\}} \\
\Lambda_n{e_U} &= (n - 2|U|)e_U
\end {align*}
We see that $A_n,A^{\dag}_n,\Lambda_n$ are all operators on a Fock space the evolution of $j_n$ can be described by an automorphism on a standard logic.  $\blacksquare$
\end {proof}
\end {thm}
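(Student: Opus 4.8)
The plan is to prove the statement in two stages: first that the symmetric Fock space underlying the walk carries a \emph{standard} logic, and then that the family $\{j_n\}$ acts on this logic by automorphisms.

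For the first stage I would begin with the finite tensor power $\mathscr{H}^N=\mathscr{H}_1\otimes\dots\otimes\mathscr{H}_N$ and decompose it under the natural action of the symmetric group $S_N$ using the character projections $P_\tau$ displayed above, so that $\mathscr{H}^N=\oplus_\tau\mathscr{H}^N(\tau)$ with each summand finite dimensional and hence separable; the logic of each $\mathscr{H}^N(\tau)$ is therefore standard. Passing to the limit $N\to\infty$ one recovers the Fock space, and the decisive point, following Varadarajan's argument, is that only countably many inequivalent characters $\tau$ actually occur, so the limiting logic is a countable orthogonal union of standard logics. By the theorem of Piron recalled above --- a projective logic whose mutually orthogonal families of points are at most countable is precisely the logic of closed subspaces of a separable Hilbert space --- this union is again a standard logic.

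For the second stage I would realize $j_n$ explicitly through the three noise processes $A_n$, $A^{\dag}_n$, $\Lambda_n$ acting on the Fock basis $e_U$, $U\subset\{1,\dots,n\}$, via the combinatorial formulas given above; the recursion for $j_n(X)$ in terms of $\theta^i_i$ and $a_n^\dagger a_n$ then expresses the whole evolution inside the algebra generated by these operators. Since $\Theta=U_c^\dagger\Lambda_w U_c$ and each structure map $\theta^j_i$ is a $*$-map, every $j_n$ is a $*$-homomorphism, hence carries projections to projections and orthogonal pairs to orthogonal pairs; it therefore preserves the orthocomplemented-lattice operations, i.e.\ it induces a morphism of the standard logic. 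Invertibility of the coin dynamics (the walk is driven by the unitary $U_c$) upgrades this morphism to an automorphism; equivalently, by the $*$-category remarks above every isomorphism of a standard logic is implemented by a unitary or anti-unitary operator, which is exactly what the assembled operators $A_n,A^{\dag}_n,\Lambda_n$ provide.

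The step I expect to be the main obstacle is the countability/separability control in the first stage: a priori $\otimes_{n=0}^{\infty}\mathscr{H}_n$ is non-separable, with an uncountable orthonormal basis indexed by coin-outcome sequences, so the argument only works on the separable sector generated from the reference state $\rho_0\otimes\ket{e_0}\bra{e_0}\otimes\ket{e_0}\bra{e_0}\otimes\dots$ under the process. One must check that the $j_n$-dynamics never leaves this sector and that the induced $S_N$-decomposition into $\mathscr{H}^N(\tau)$ pieces stays countable in the limit; once that is in place, the remainder is routine bookkeeping with $*$-homomorphisms and an appeal to Piron's theorem.
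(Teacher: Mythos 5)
Your proposal follows essentially the same route as the paper's own proof: the $S_N$ character-projection decomposition of $\mathscr{H}^N$ into the summands $\mathscr{H}^N(\tau)$, the appeal to Varadarajan's countability argument to conclude the Fock-space logic is standard, and the explicit action of $A_n,A^{\dag}_n,\Lambda_n$ on the basis vectors $e_U$ to conclude that $j_n$ acts by an automorphism. You in fact make explicit several steps the paper leaves implicit --- the invocation of Piron's theorem, the $*$-homomorphism/lattice-preservation argument for $j_n$, and the restriction to the separable sector generated by the stabilizing sequence --- so your write-up is, if anything, the more complete version of the same argument.
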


 \tikzset{
  treenode/.style = {shape=rectangle, rounded corners,
                     draw, anchor=center,
                     text width=5em, align=center,
                     top color=white, bottom color=blue!20,
                     inner sep=1ex},
  decision/.style = {treenode, diamond, inner sep=0pt},
  root/.style     = {treenode, font=\Large,
                     bottom color=red!30},
  env/.style      = {treenode, font=\ttfamily\normalsize},
  JC/.style   = {root, circle, bottom color=yellow!90},
  bath/.style   = {root, bottom color=green!40},
  dummy/.style    = {circle,draw}
}
\newcommand{\yes}{edge node [above] {L}}
\newcommand{\no}{edge  node [left]  {M}}

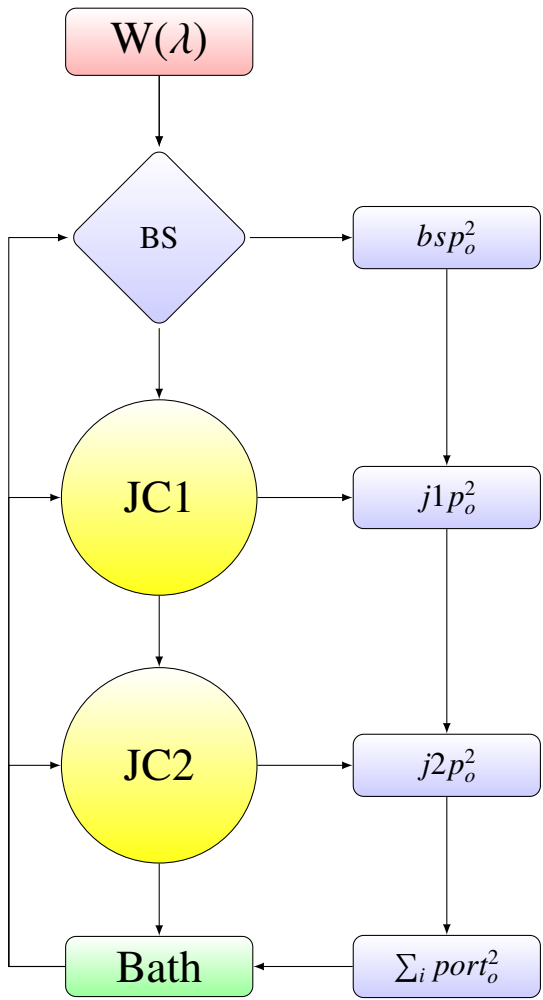
\begin{figure}
\begin {turn}{0}
\begin{tikzpicture} [-latex]

\matrix (chart)
[
      matrix of nodes,
      column sep      = 3em,
      row sep         = 5ex,
      column 1/.style = {nodes={decision}},
      column 2/.style = {nodes={env}}
    ]
    {
      |[root]| W($\lambda$)           &           \\
                   BS                           &     $bsp_o^2$     \\
       |[JC]|   JC1                          &      $j1p_o^2$     \\
       |[JC]|  JC2                           &      $j2p_o^2$     \\
       |[bath]| Bath                        &      $\sum_i port_o^2$    \\
                     };
    \draw
    (chart-1-1) edge (chart-2-1)
    \foreach \x/\y in {2/3, 3/4, 4/5} {
      (chart-\x-1) edge   (chart-\y-1) }
    \foreach \x in {2,...,4} {
       (chart-\x-1) edge (chart-\x-2) }
    (chart-5-2) edge (chart-5-1)
   ;  
     \draw
   (chart-5-1) -- +(-2,0) |- (chart-2-1)
     node[near start,sloped,above] {};
     
      \draw
   (chart-5-1) -- +(-2,0) |- (chart-3-1)
     node[near start,sloped,above] {};
   
       \draw
   (chart-5-1) -- +(-2,0) |- (chart-4-1)
     node[near start,sloped,above] {};
    
     \draw
    (chart-1-1) edge (chart-2-1)
    \foreach \x/\y in {2/3, 3/4, 4/5} {
      (chart-\x-2) edge   (chart-\y-2) }
   ;  
  
  \end {tikzpicture}
\end{turn}
\caption {\label{fig:BSFB} Cascade of two Jaynes-Cummins open systems driven by a laser input in feedback configuration.}
 \end {figure}
\section{QNET}
We have reviewed quantum stochastic calculus in the appendix and here we discuss an algebra based on it. Before that we want to make sure that the logic of the observables of quantum stochastic calculus is a sub logic to a standard one which we accomplish via von Neumann algebras. The quantum walk evolution described above is a discrete-time dynamics and that of QNET algebra are time-continuous flows and so it is important to make sure that we work with standard logics.

\begin {prop} The logic of Weil operators of QNET *-algebra is a sub logic of standard logic. 
\end {prop}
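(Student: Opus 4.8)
The plan is to realize the QNET *-algebra concretely on a Boson Fock space and then run, in continuous time, the same separability argument used in the preceding theorem, closing with Piron's theorem. First I would recall from the appendix that the Hudson--Parthasarathy calculus underlying QNET is carried by the symmetric (Boson) Fock space $\Gamma(\mathscr{K})$ over the one-particle multiplicity space $\mathscr{K}=L^2(\mathbb{R}_+)\otimes\mathbb{C}^n$, with $n$ the finite number of field channels, and that the Weyl operators $W(f)$, $f\in\mathscr{K}$, together with the time-indexed Weyl cocycle unitaries $W_t(\cdot)$ appearing in the QNET clauses (the $W(\lambda)$ of Figure~\ref{fig:BSFB}), are unitaries in $\mathbb{B}(\Gamma(\mathscr{K}))$. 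The first substantive step is separability: $L^2(\mathbb{R}_+)$ is separable and $\mathbb{C}^n$ is finite dimensional, so $\mathscr{K}$ is separable, and $\Gamma(\mathscr{K})=\bigoplus_{k\geq 0}\Gamma_k(\mathscr{K})$, with $\Gamma_k(\mathscr{K})$ the $k$-particle symmetric subspace, is a countable direct sum of separable spaces; here I can reuse the decomposition $\mathscr{H}^N=\oplus_\tau\mathscr{H}^N(\tau)$ of the preceding proof to identify each $\Gamma_k(\mathscr{K})$ with the image of the symmetrizing character projection and to note that the sum over $k$ is countable.

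Next I would pass to the von Neumann algebra $\mathscr{W}=\{W(f):f\in\mathscr{K}\}''\subseteq\mathbb{B}(\Gamma(\mathscr{K}))$ generated by the Weyl operators of QNET and to its lattice of projections $\mathscr{P}(\mathscr{W})$, which is by definition the logic of the Weyl operators. The facts I would invoke are standard for von Neumann algebras: $\mathscr{P}(\mathscr{W})$ is an orthomodular lattice under $P^{\perp}=\mathbb{I}-P$, with meet the projection onto the intersection of ranges (which lies in $\mathscr{W}$ since $\mathscr{W}$ is strongly closed), so the inclusion $\mathscr{P}(\mathscr{W})\hookrightarrow\mathscr{P}(\mathbb{B}(\Gamma(\mathscr{K})))$ is an orthomodular embedding preserving order, orthocomplementation and countable joins, i.e. a sublogic inclusion. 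Because $\Gamma(\mathscr{K})$ is separable, every family of mutually orthogonal nonzero projections in $\mathscr{P}(\mathbb{B}(\Gamma(\mathscr{K})))$, a fortiori in $\mathscr{P}(\mathscr{W})$, is at most countable, and the ambient lattice is complete and projective; Piron's theorem as quoted above then identifies $\mathscr{P}(\mathbb{B}(\Gamma(\mathscr{K})))$ with the logic of all closed linear manifolds of a separable Hilbert space, i.e. a standard logic. Composing the two inclusions exhibits the logic of the Weyl operators as a sublogic of a standard logic.

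The point I expect to require the most care is not the lattice theory but verifying that the object named ``QNET *-algebra'' genuinely acts on a single separable Fock space. The cascade and feedback composition in Figure~\ref{fig:BSFB} builds it from beam-splitter and Jaynes--Cummings components, each supplying Weyl unitaries on its own Fock factor, so I would check that concatenating finitely many components keeps the operators on $\Gamma(L^2(\mathbb{R}_+)\otimes\mathbb{C}^n)$ for a finite $n$ -- not on an uncountable tensor product -- using the compatibility of StdLog with the monoidal structure noted above, which matches the network series and feedback product with the tensor product of Fock spaces. Once the ambient one-particle space is seen to be separable the argument goes through verbatim, and if one allows a countably infinite family of channels the separability of the relevant sub-Fock-space is recovered exactly by the Varadarajan carving of the preceding theorem. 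A final routine remark is that passing to any sub-*-algebra of QNET only shrinks the projection lattice, so the sublogic property is stable under the operations used to assemble quantum networks.
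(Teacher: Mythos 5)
Your proposal is correct and follows essentially the same route as the paper: form the von Neumann algebra generated by the Weyl operators via the double commutant, take its projection lattice (equivalently, the closed linear manifolds of the symmetric Fock space whose projections lie in the algebra), and observe that this sits inside the standard logic of the separable Fock space as a sublogic. Your additional steps---the explicit separability check for $\Gamma(\mathscr{K})$ and the appeal to Piron's theorem---are harmless elaborations rather than a different argument; indeed Piron's theorem is not needed here, since the lattice of all closed subspaces of a separable Hilbert space is a standard logic directly by the paper's definition.
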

\begin {proof}
Let us construct the von Neumann algebra of the Weil observables of the QNET *-algebra as $\mathscr{U} = vN(\mathscr{S}) = (\mathscr{S}\cup\mathscr{S}^*)^", \mathscr{S}\subset \mathscr{B}(\mathscr{F})$ using the double commutant theorem.
Let $\mathscr{U}$ be the von Neumann algebra of the observables of the quantum stochastic calculus. Then, 
\begin {equation}
\mathscr{L}_\mathscr{U} = \{ M: M \in \mathscr{L}(\mathscr{F}), P^M \in \mathscr{U} \}.
\end {equation}
is a logic with linear manifolds as elements that are ranges of projections of symmetric Fock space. $\mathscr{L}_\mathscr{U}$ is easily verified as a logic and a sub logic of the standard logic of symmetric Fock space. $\blacksquare$
\end {proof}

The QNET *-algebra has three operations, namely the serial connection $\triangleleft$, the parallel connection $\boxplus$, and the feedback to connect quantum systems. This algebra can be represented as Horn clauses using the AND, OR, and recursion connectives. An example of clauses to describe the above circuit ($\text{\color{blue}Figure \ref {fig:BSFB}}$) with $A_t(\Omega), A^\dag_t(\Omega), \Lambda_t(\Omega)$ as the three standard vacuum $(\Omega)$ quantum noises, whose discrete versions will play a role in describing quantum walks later, with Weyl predicates is shown below that drive the evolution:
\begin {align*}
\overset{3}{S}LH_{WBS}^t &\multimapinv \overset{3}{W}(t, \alpha, port_0) \boxtimes \overset{1}{b}s(t, bsp_i^1, bsp_i^2, bsp_o^1, bsp_o^2) \boxtimes 0-(bsp_i^1 = port_0). \\
\overset{3}{S}LH_{BSJC}^t &\multimapinv \overset{3}{S}LH_{WBS}^t \boxtimes \overset{1}{J}C_1(t, j1p_i^1, j1p_i^2, j1p_o^1, j1p_o^2) \boxtimes 0-(bsp_o^1 = j1p_i^1). \\
\overset{3}{S}LH_{QNET}^t &\multimapinv \overset{3}{S}LH_{BSJC}^t \boxtimes \overset{1}{J}C_2(t, j2p_i^1, j2p_i^2, j2p_o^1, j2p_o^2) \boxtimes 0-(j1p_o^1 = j2p_i^1). \\
\overset{3}{S}LH_{QNET}^t &\multimapinv \overset{3}{S}LH_{QNET}^t \boxtimes \overset{1}{(j}2p_o^2 = bsp_i^2). \\
\overset{3}{a}eSLH_{QNET}^t &\multimapinv \overset{3}{S}LH_{QNET}^t \boxtimes \overset{2}{r}educe_{SLH}.
\end {align*}
These time dependent predicates describe the Weyl unitary operators that are solutions to Hudson-Parthasarathy quantum stochastic differential equations. The SLH parameters that result in each step and the ports for the head predicates are not shown to focus on the connectivity between the components. The last clause will apply the adiabatic elimination theorem after verifying the required premises via the reduce predicate, assumptions one to four, and construct the SLH parameters accordingly. There are other model reduction techniques \cite {Luc2008} that can be applied, manually a cumbersome process, that can be automated within this framework. Thus, we have compact expressions of the circuit dynamics and a powerful way to reason about them.

Evans-Hudson flows $j_t(X)$ \cite {KP1995} are stochastic processes in Heisenberg picture, continuous version of $j_n$ quantum walks, that satisfy Hudson-Parthasarathy quantum stochastic differential equations and the corresponding predicates are dual to Weyl predicates.

\begin {lem}
The triple $(Weyl_t, HE_t, \iota)$ form a Chu space where $Weyl_t$ are predicates corresponding to second quantized operators satisfying Hudson-Parthasarathy quantum differential equations and $HE_t$ are the dual predicates corresponding to operators satisfying the Evans-Hudson flows.
\end {lem}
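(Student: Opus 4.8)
The plan is to build $(Weyl_t, HE_t, \iota)$ in exact parallel with the triple $(\text{Obs},\text{State},\iota)$ that forms $Chu(StdLog,\iota)$: one column of the Chu space is the family of second-quantized predicates that push states forward (the Schr\"odinger picture), the other is the family of predicates that push observables forward (the Heisenberg picture), and $\iota$ is the pairing exhibiting the two columns as mutually adjoint. First I would fix the ambient $S^{*}$-category. By the preceding Proposition the logic of the Weyl operators of the QNET $*$-algebra is a sub-logic of the standard logic of the symmetric Fock space $\mathscr{F}$; hence the von Neumann algebra $\mathscr{U}=vN(\mathscr{S})$ and its logic $\mathscr{L}_{\mathscr{U}}$ supply a bona fide StdLog-object in which both columns live, and the Chu construction of the earlier lemma applies without change.

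Next I would name the two columns and the pairing. Let $Weyl_t$ be the family of cocycle (Weyl) unitaries $W_t\in\mathscr{U}$ obtained as the unique solutions of the Hudson--Parthasarathy quantum stochastic differential equation $dW_t = W_t\bigl((S-1)\,d\Lambda_t + L\,dA^{\dagger}_t - L^{\dagger}S\,dA_t - (\tfrac{1}{2}L^{\dagger}L + iH)\,dt\bigr)$, $W_0=1$, with $S$ unitary, $L$ bounded and $H$ self-adjoint; and let $HE_t$ be the Evans--Hudson flows $j_t:\mathbb{B}_0\to\mathbb{B}(\mathscr{F})$ solving $dj_t(X)=\sum_{\alpha,\beta}j_t\bigl(\theta^{\alpha}_{\beta}(X)\bigr)\,d\Lambda^{\alpha}_{\beta}(t)$ with the structure maps $\theta^{\alpha}_{\beta}$ of the appendix. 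Existence and uniqueness for the first is the Hudson--Parthasarathy theorem \cite{KP1992} and for the second is \cite{KP1995}, so each column is a well-defined set of predicates. The relation $\iota$ is then the flow identity
\[
\iota(W_t, j_t):\qquad j_t(X)\;=\;W_t^{*}\,(X\otimes 1)\,W_t ,
\]
which, tested on exponential vectors, reads $\langle e(f)\otimes u,\; j_t(X)\,e(g)\otimes v\rangle = \langle W_t\,(e(f)\otimes u),\;(X\otimes 1)\,W_t\,(e(g)\otimes v)\rangle$. This is precisely the natural isomorphism $hom(Fc,d)\cong hom(c,F^{*}d)$ that makes $HE_t$ the adjoint of $Weyl_t$; in the language of the $S^{*}$-category it is an instance of the defining involution, antinatural in the first slot and natural in the second.

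Then I would verify the Chu axioms, which is bookkeeping once the analytic facts are granted. (i) $\iota$ is a well-defined map into the value object --- here $\mathbb{C}$, equivalently the probability measures delivered by the State functor into $\mathscr{B}(\mathbb{R})$ --- because the two differential equations have unique solutions on the dense domain of exponential vectors and the resulting sesquilinear forms are bounded after one passes to $\mathscr{U}$. (ii) $\iota$ is bilinear in each argument, since conjugation $X\mapsto W^{*}XW$ is linear, the Fock pairing is sesquilinear, and the Hudson--Parthasarathy and Evans--Hudson equations are linear in $W$ and in $X$ respectively; this is the analog of the bilinearity of the composition map $o$ in the definition of a StdLog-category. (iii) The two columns are genuinely dual: conjugation by $W_t^{*}$ carries $HE_t$ back onto $Weyl_t$, so the Chu space really is the hom-adjunction of the two evolution pictures, and by the preceding Proposition it embeds into $Chu(StdLog,\iota)$.

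The hard part is analytic rather than categorical. The Fock space $\mathscr{F}=\bigoplus_{n\ge 0}\mathscr{H}^{\otimes n}$ need not be separable and the noise generators $A_t, A^{\dagger}_t, \Lambda_t$ are unbounded --- the quantum stochastic differential equation lives on a domain of exponential vectors --- so one must certify that $\iota$ is densely defined, extends consistently, and induces a standard logic; this is exactly what the preceding Proposition and the Varadarajan decomposition $\mathscr{H}^{N}=\bigoplus_{\tau}\mathscr{H}^{N}(\tau)$ are there to provide. The remaining delicate point is associativity of composition inside the Chu space, which reduces to the cocycle identity $W_{s+t}=W_s\,\sigma_s(W_t)$ for the Bogoliubov (second-quantized) shift $\sigma_s$, together with the matching flow property $j_{s+t}=j_s\circ\sigma_s(j_t)$ on the Heisenberg side; establishing these is where the care of the Hudson--Parthasarathy theory is spent, but once they are in hand the Chu axioms close and the lemma follows. $\blacksquare$
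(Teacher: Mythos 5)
The paper itself offers no proof of this lemma --- it is asserted immediately after the remark that Evans--Hudson flows are ``dual to Weyl predicates'' --- so the comparison can only be against the argument the paper leaves implicit, and your proposal is exactly that argument made explicit: you mirror the earlier $(\mathrm{Obs},\mathrm{State},\iota)$ construction of $Chu(StdLog,\iota)$, take the Hudson--Parthasarathy cocycles as one column, the Evans--Hudson flows as the other, and exhibit the pairing $\iota$ through $j_t(X)=W_t^{*}(X\otimes 1)W_t$, which is precisely the Schr\"odinger/Heisenberg duality the paper invokes (compare the caption $j_n(X)=U_t^{\dagger}XU_t$ and the preceding Proposition embedding the Weyl logic in the standard logic of the Fock space). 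Two small caveats: your remark that the Fock space ``need not be separable'' is off --- the symmetric Fock space over a separable one-particle space is separable (it is the countable \emph{infinite tensor product}, discussed in the paper's appendix, that can fail to be separable), and that separability is in fact what the preceding Proposition and the Piron/Varadarajan machinery rely on, so this is an obstacle you do not actually face; and the identification of $HE_t$ with conjugation by the HP cocycle holds for flows implemented by a unitary cocycle with structure maps built from $(S,L,H)$, which is the situation the paper intends but which you should state as a hypothesis rather than as a property of arbitrary Evans--Hudson flows. Neither point undermines the construction; your proof supplies the verification the paper omits.
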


\begin {lem}
The triple $(Weyl_t, Weyl^{\bot}_t, \bot)$ form a Chu space where $Weyl_t$ are predicates corresponding to second quantized operators satisfying Hudson-Parthasarathy quantum differential equations and the relation $Weyl^{\bot} \wedge Weyl^{\bot}_t = 0$ satisfied. This Chu construction ensures backward chaining corresponds to refutations and forward chaining results in proofs. In the above if we replace the equality with an isomorphism $Weyl^{\bot} \wedge Weyl^{\bot}_t \rightarrow 0$ we will get a 2-Chu construction as it is a 2-morphism.
\end {lem}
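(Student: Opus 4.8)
The plan is to realize $(Weyl_t, Weyl^\bot_t, \bot)$ as an instance of the generic Chu construction already used for $Chu(PrjSpc,\bot)$, specialized to the sub-logic of Weyl observables identified in the preceding Proposition. First I would fix the dualizing datum: the ambient logic is the standard logic $\mathscr{L}_\mathscr{U}$ of the von Neumann algebra $\mathscr{U}=vN(\mathscr{S})$ generated by the Weyl operators on the Boson Fock space $\mathscr{F}$, so the natural pairing takes values in the two-element lattice $\{0,1\}$ recording whether two closed linear manifolds are orthogonal; this is the relation $\bot$. I take $Weyl_t$ to be the family of closed manifolds (equivalently the projections $P^M\in\mathscr{U}$) arising as ranges of the second-quantized Weyl unitaries that solve the Hudson--Parthasarathy QSDEs at time $t$, and $Weyl^\bot_t$ the family of their orthocomplements inside $\mathscr{L}_\mathscr{U}$; orthocomplementation is available precisely because $\mathscr{L}_\mathscr{U}$ is an orthomodular standard logic by the Proposition, and closedness of the manifolds together with countability of orthogonal families is inherited from the separability of $\mathscr{F}$ via Piron's theorem.

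Next I would check the Chu-space axioms. The pairing $\bot\colon Weyl_t\times Weyl^\bot_t\to\{0,1\}$ is a genuine relation, and the defining separation and extensionality conditions amount to: a Weyl manifold $M$ is determined by the collection of manifolds orthogonal to it, and dually each member of $Weyl^\bot_t$ is determined by what it annihilates. Both follow from Gleason/Piron once we know there are ``enough'' mutually orthogonal points, which is exactly the countable-orthogonality hypothesis of the quoted Piron theorem. The stated relation $Weyl^\bot\wedge Weyl^\bot_t=0$ is then nothing but the identity that the meet in $\mathscr{L}_\mathscr{U}$ of a manifold with its orthocomplement is the zero manifold, i.e. the contradiction object $0$; hence the triple is a bona fide Chu space. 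Its canonical involution $(A,X,r)\mapsto(X,A,r^{\mathrm{op}})$ swaps the ``point'' and ``state'' coordinates, and under our reading this interchanges forward chaining (composing the Weyl/Hudson--Parthasarathy morphisms in the Schr\"odinger direction to build a proof) with backward chaining (composing their orthocomplements and adjoints in the Heisenberg direction to build a refutation), which is the asserted correspondence.

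For the last sentence I would promote the construction one categorical level, exactly as $Chu(PrjSpc,\bot)$ was upgraded to a $2$-Chu space in the earlier lemma on $3$-PrjSpc. Replacing the equality $Weyl^\bot\wedge Weyl^\bot_t=0$ by a chosen arrow $Weyl^\bot\wedge Weyl^\bot_t\to 0$ into the zero object turns the pairing into a $2$-cell; rerunning the Chu construction with this $2$-morphism as the dualizing datum yields a $2$-Chu space, because $\multimapinv$ (equivalently $\to$) is the only admissible $2$-morphism in this framework and composition of such arrows respects the biextensionality conditions by functoriality of $\wedge$ and of $(-)^\bot$.

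The step I expect to be the main obstacle is this well-definedness and ``enough points'' issue: one must verify that the Weyl manifolds solving the Hudson--Parthasarathy equations, together with their orthocomplements, genuinely form a sub-orthomodular-lattice of $\mathscr{L}_\mathscr{U}$ closed under the operations the pairing needs. Because $\mathscr{F}$ is infinite-dimensional, closedness of manifolds and countability of orthogonal families cannot be assumed and must be pulled back from the Proposition on $\mathscr{U}$ and from Piron's theorem; once that is secured, the involution and the $2$-categorical lift are formal.
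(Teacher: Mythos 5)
The paper states this lemma bare, with no accompanying proof (unlike the preceding Proposition on the logic of Weyl operators), so there is no argument of the author's to compare yours against; what can be judged is whether your reconstruction is faithful to the framework the paper sets up, and it largely is. You assemble the Chu space from exactly the ingredients the paper has already prepared: the sub-logic $\mathscr{L}_\mathscr{U}$ of the von Neumann algebra generated by the Weyl operators (the preceding Proposition), Piron's theorem for separability and countable orthogonal families, and the orthogonality pairing $\bot$ used earlier in $Chu(PrjSpc,\bot)$. Your reading of the (almost certainly mistyped) relation $Weyl^{\bot}\wedge Weyl^{\bot}_t=0$ as ``the meet of a Weyl manifold with its orthocomplement is the zero element,'' i.e. $Weyl_t\wedge Weyl^{\bot}_t=0$, is the natural charitable interpretation and making it explicit is a genuine improvement over the statement. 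The identification of the Chu involution with the proof/refutation and Schr\"odinger/Heisenberg dualities, and the lift to a 2-Chu space by replacing the equality with an arrow into $0$, match the paper's surrounding narrative at its level of rigor.

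Two points deserve correction or caution. First, you describe $Weyl_t$ as the closed manifolds ``arising as ranges of the second-quantized Weyl unitaries''; Weyl operators are unitary, so their ranges are all of $\mathscr{F}$ and this family would be trivial. What you need (and what you in fact invoke in the same breath) is the family of closed manifolds $M$ with $P^M\in\mathscr{U}=vN(\mathscr{S})$, i.e. projections in the algebra generated by the Weyl solutions of the Hudson--Parthasarathy equations, not ranges of the unitaries themselves; the argument should be phrased only in those terms. Second, the separation and extensionality conditions you treat as the ``main obstacle'' are not required for a triple $(A,X,r)$ to be a Chu space over $\{0,1\}$; they matter only if one wants an extensional or separated Chu space, so the bare lemma already follows once the pairing is well defined on $\mathscr{L}_\mathscr{U}$, and the Gleason/Piron appeal is needed only for the stronger biextensional refinement.
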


\begin {exmp} In the context of logic programming an algorithm can be described as a theorem to be proved and here we describe the Polya{'}s urn scheme \cite {KP1992} in terms of quantized Horn clauses. It is an urn with a white balls and b red balls and a ball is drawn at random. It is replaced with c balls of same color and this process can be described by a classical Markov chain described in Appendix with equations \ref{eq: theta} of dimension 2 and \ref{eq: unitary}.\\

The state of the system is represented by $S=\mathscr{N}\times\mathscr{N}$.
\begin {align*}
\phi(x,y) &= (x + c, y), &\psi(x,y) &= (x, y + c). \\
p(x,y) &= x(x + y)^{-1}, &q(x, y) &= y(x + y)^{-1}.
\end {align*}
The corresponding Horn clauses can be set up as
\begin {equation}
(\theta^0_0{f})(x, y) \multimapinv (x + y)^{-1}\{xf(x + c, y) + yf(x, y + c)\}
\end {equation}
\end {exmp}
Then, the corresponding $j_n$ process can be used to get the probability distribution of (x,y). That is, the state of the urn is (x,y) will be true with a probability that can be found by directly measuring the observable.

Inspired by the approach in quantum filtering \cite{LUC2006}, where a commutative algebra is used for calculating the conditional probabilities, as in any single realization only compatible operators are measured, we propose a design for the theorem prover measuring only commuting observables in a single pass. A viable theorem prover will be based on an hybrid architecture of classical and quantum processing, for example, unification, resolution, and backtracking can be carried out by a classical processor. State preparation, measurements, and unitary evolutions can be performed at the quantum processor. 

\section{Conclusions and Open issues}
We have proposed a framework to reason with infinite quantum systems using ingredients from higher category theory but rooted in algebraic techniques. The proposed language has a focus on computation from logical deductions generalizing classical logical programming to quantum context. We have outlined few details for constructing a theorem prover as a quantum stochastic process that can be implemented on a quantum hardware. The framework treats countable classical states, classical-quantum interactions, time reversibility, unitary and more general completely positive map based evolutions. In addition, the framework accommodates interactions with environment in a natural way with the use of symmetric Fock space. The detailed treatment of semantics of the framework in terms of higher category theory will be investigated in the future. 

\section* {Acknowledgements}
The authors would like to thank the anonymous referees who provided very valuable feedback in the previous version of the manuscript. The author is grateful to Nikolas Tezak for the long discussions he had at Stanford, along with several cups of espresso, on QNET framework and developing the iPython scripts for the Jaynes-Cummins models.
\appendix

\section {Quantum probability}
The central ideas of classical probability consist of random variables and measures that have quantum analogues in self adjoint operators and trace mappings. The way probabilities are calculated differ significantly from classical theory and the events themselves manifest only after quantum mechanical measurements taken place.

\begin{defn} A finite dimensional quantum probability (QP) space is a tuple $(\mathscr{H},\mathbb{A}, \mathbb{P})$ where  $\mathscr{H}$ is a separable complex Hilbert space, $\mathbb{A}$  is a C* algebra that constitute the event space of orthogonal projections, and $\mathcal{P}$ is a trace class operator, specifically a density matrix in finite dimensional case, denoting the quantum state.\end{defn}
 
This is analogous to a classical probability (CP) space which is a tuple $(\Omega, \mathbb{F}, \rho)$ where $\Omega$ is the set of outcomes of a random experiment, $\mathbb{F}$ is the $\sigma$-algebra of events, and $\rho$ is a probability measure. 

\begin{defn} Canonical observables: Starting from a $\sigma$-finite measure space we can construct observables on a Hilbert space that are called canonical, as every observable can be shown to be unitarily equivalent to the direct sum of them \cite{KP1992}. Let $(\Omega, \Gamma, \mu)$, be a $\sigma$-finite measure space with a countably additive $\sigma-algebra$. We can construct the complex Hilbert space as a space of all square integrable functions w.r.t $\mu$ and denote it as $L^2(\mu)$. Then, the observable $\xi^{\mu}:\Gamma\rightarrow\mathcal{P}(\mathscr{H})$ can be set up as  $(\xi^\mu(E)f)(\omega)=I_E(\omega)f[\omega],f\in{L^2}(\mu)$ where I is the indicator function. 
\end{defn}
\begin{exmp} Let  $\mathscr{H}$=$\mathscr{C}^2$ and $\mathbb{A}=M_2$ the *-algebra of complex matrices of dimension $2 \times 2$ and the state  $\mathbb{P}(A)=\langle\psi,A\psi\rangle=\langle{A^\dag}\psi,\psi\rangle$ where $\psi$ is any unit vector. This space models quantum spin systems in physics and qubits in quantum information processing. This example can be generalized to n-dimensional space to build quantum probability spaces. 
\end{exmp}

\begin{defn} Quantum proposition: An atomic event which corresponds to a projection of a separable Hilbert space is a quantum proposition. As quantum observables can be decomposed into projections a canonical observable defined above can be treated as a proposition. A quantum predicate is a mapping that has bound variables which when instantiated to a specific value becomes a canonical variable such as the biased coin toss (BToss) described below.
\end {defn}
\begin{defn} Two quantum mechanical observables are said to be compatible, that is they can be measured simultaneously, if the operators representing them can be diagonalized concurrently. The two operators that share a common eigenvector will be characterized as co-measurable in terms of lattice elements in the section on logic programming.\end{defn}

\section{Entanglement semantics}

We base quantum probabilistic logic on separable Hilbert spaces of H-interpretations preserving the orthomodular lattice structure in accordance with the usual formulations of quantum logic \cite{VAR}. We describe closed formulas of Horn clauses using hermitian operators for random variables and Dirac measures for deterministic propositions treating commuting and non-commuting operators in the same setting. In the quantum case closed formulas result after a measurement projecting on to a specific eigen value of an observable. We can immediately see that we cannot construct H-interpretations for non-commuting observables simultaneously. Such constraints can be placed in a natural way using the operator algebraic formulation of the WFF. 

 Separable Hilbert spaces inherit the non-Boolean algebra of vector spaces described above and thus are suitable for basing the formalism of quantum mechanics. The projection operators of a separable Hilbert space can be thought of as the propositions of quantum logic as well as events of a probability space on which we can define quantum probability measures as established by Gleason.  The lattice on Hilbert spaces is non-Boolean as it violates the distributive law. The join and meet operations of this lattice are the set theoretic union (of closed linear spaces) and intersection respectively. From Figure 2 we have, $M'\boxtimes(M\vee{M}^{\perp})=M^o$. Whereas, $M^o\boxtimes{M}=M^o\boxtimes{M}^\perp=0$.
\\
\begin{figure}
\center{
\begin{tikzpicture}[
    scale=5,
    axis/.style={very thick, ->, >=stealth'},
    important line/.style={thick},
    dashed line/.style={dashed, thin},
    pile/.style={thick, ->, >=stealth', shorten <=2pt, shorten
    >=2pt},
    every node/.style={color=black}
    ]
    \draw[axis] (-0.1,0)  -- (1.1,0) node(xline)[right]
        {$M$};
    \draw[axis] (0,-0.1) -- (0,1.1) node(yline)[above] {$M^\perp$};
    \draw[important line] (0,0) coordinate (A) -- (.85,.85)
        coordinate (B) node[right, text width=5em] {$M^o$};
\end{tikzpicture}\\
\caption{The lattice of projective geometries that forms a non-Boolean algebra.}
}
\end{figure}
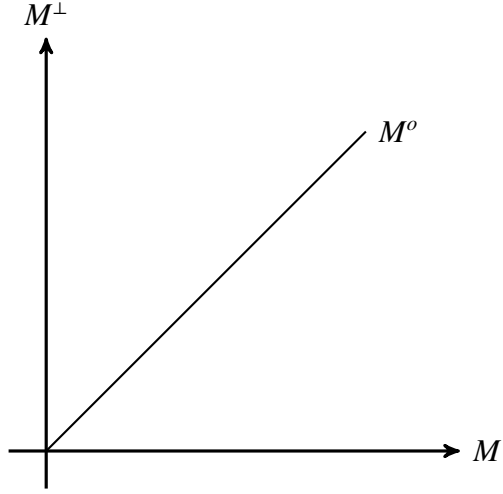

We can now extend the CP for probabilistic logic to a quantum context by defining the following Hilbert space, inner product and von Neumann algebras \cite{RB2015}: 
\begin{align*}
\mathscr{H}_{ES} & = L^2(\Omega = \cup_{j=1}^{\infty}Supp(I_j), \sigma(\Omega_F),\rho).\\
\langle{f}, g\rangle & =\int_\Omega {(f^*g)d\rho};f,g\in{H}.\\
\mathbb{A}_{ES}& =L^\infty(\Omega = \cup_{j=1}^{\infty}Supp(I_j),\sigma(\Omega_F),\rho)
\end{align*}

\begin{prop} The Hilbert space $\mathscr{H}_{ES}$ is separable. \\
\begin{proof} Let us define indicator functions on H-interpretations with exactly one atom being true. This forms a countably infinite basis and so $\mathscr{H}_{ES}$ is separable. Another way to look at is that this Hilbert space is isomorphic to the Banch space $l^2(\mathscr{N})$, the space of square-summable sequences. The *-algebra $\mathbb{A}_{ES}$ defined by point-wise multiplication of the measurable functions is commutative.
\end{proof}
\end{prop}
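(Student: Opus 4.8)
The plan is to realize the underlying measure space of $\mathscr{H}_{ES}$ as (a measurable subset of) a countable product of two-point spaces and then invoke the classical fact that the $L^2$ space of a countably generated probability space is separable; the isomorphism with $l^2(\mathscr{N})$ asserted in the statement then comes for free by choosing an orthonormal basis. Concretely, the first-order language underlying the Horn-clause program has at most countably many predicate, function and constant symbols, so the Herbrand universe is countable and hence the Herbrand base $\{a_0,a_1,a_2,\dots\}$ of ground atomic formulas is countable. An H-interpretation is a truth assignment $I\colon\{a_n\}_n\to\{0,1\}$, so the sample space embeds in $\{0,1\}^{\mathscr{N}}$, and $\Omega=\cup_{j}\mathrm{Supp}(I_j)$ is a subset thereof.

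The key structural step is to check that $\sigma(\Omega_F)$ is countably generated. The cylinder events $C_n=\{I:I(a_n)=1\}$, $n\in\mathscr{N}$, generate a $\sigma$-algebra $\mathcal{B}$ on $\{0,1\}^{\mathscr{N}}$ whose associated Boolean algebra of finite cylinders is countable. Since every well-formed formula is built from finitely many ground atoms by the lattice operations join, meet and orthocomplement — which here are set-theoretic union, intersection and complement of H-interpretation sets — each event $\Omega_F$ lies in $\mathcal{B}$; hence $\sigma(\Omega_F)\subseteq\mathcal{B}$ and in particular $\Omega=\cup_j\mathrm{Supp}(I_j)$ is $\mathcal{B}$-measurable. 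Restricting to $\Omega$ preserves countable generation, so $(\Omega,\sigma(\Omega_F),\rho)$ is a probability space with a countably generated $\sigma$-algebra.

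From here the separability is routine: for a countably generated $\sigma$-algebra and a $\sigma$-finite (here probability) measure, finite linear combinations with coefficients in $\rationals+i\rationals$ of indicator functions of sets drawn from the generating countable Boolean algebra of cylinders form a countable dense subset of $L^2$, by a standard monotone-class/approximation argument. Applying this to $\mathcal{F}=\sigma(\Omega_F)$ restricted to $\Omega$ gives separability of $\mathscr{H}_{ES}$. To recover the explicit $l^2(\mathscr{N})$ picture one Gram–Schmidts this countable spanning family to an orthonormal basis; and when the semantics assigns positive mass to the ``exactly one atom true'' interpretations, the normalized indicators of the events $\{I:I(a_n)=1,\ I(a_m)=0\ \forall m\neq n\}$ already constitute such a basis — this is the family the short proof sketch refers to. Either way the Hilbert-space dimension is countable, so $\mathscr{H}_{ES}\cong l^2(\mathscr{N})$.

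The main obstacle is the second step rather than the analysis: one must be confident that $\sigma(\Omega_F)$ — the $\sigma$-algebra actually used by the entanglement semantics — cannot ``see'' more than countably many atoms and thereby escape the cylinder $\sigma$-algebra $\mathcal{B}$. This rests on the fact that the syntax of WFFs, and the lattice operations on H-interpretations, never involve more than countably many ground atoms at a time; pinning that down, together with verifying $\rho$-measurability of $\Omega=\cup_j\mathrm{Supp}(I_j)$ and that $\rho$ is a genuine $\sigma$-finite measure, is the only part that needs care. Everything else is standard real analysis.
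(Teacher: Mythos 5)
Your proposal is correct, but it reaches the conclusion by a genuinely different route than the paper. The paper's proof is a one-line exhibition of an explicit countable family: the indicator functions of H-interpretations in which exactly one ground atom is true are declared to form a countably infinite basis, giving separability and the identification $\mathscr{H}_{ES}\cong l^2(\mathscr{N})$ directly; this is fast, but it tacitly relies on the particular structure of the entanglement semantics (the measure must actually charge those single-atom interpretations, and they must span), which the paper does not pause to verify. You instead prove the general statement: the Herbrand base is countable, so H-interpretations embed in $\{0,1\}^{\mathscr{N}}$; the events $\Omega_F$ coming from WFFs lie in the cylinder $\sigma$-algebra because each formula involves only finitely many ground atoms; hence $\sigma(\Omega_F)$ is countably generated, and $L^2$ of a probability space with countably generated $\sigma$-algebra is separable by the standard rational-coefficient indicator approximation. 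Your argument buys robustness — it does not depend on which interpretations carry mass, it makes the measurability of $\Omega=\cup_j \mathrm{Supp}(I_j)$ and the countable-generation hypothesis explicit, and it recovers the paper's basis as the special case in which the single-atom interpretations have positive mass (which is exactly the caveat you flag). The paper's argument buys brevity and a concrete orthonormal family realizing the isomorphism with $l^2(\mathscr{N})$, at the cost of leaving those hypotheses implicit. Both are acceptable; yours is the more careful and more general proof of the proposition as stated.
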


\textbf{Non-commutative version:}

The *-algebra $\mathbb{A}_{ES}$ considered above is commutative as the observables are compatible. In general, a set of Horn clauses may contain incomaptible observables and so we would like to have a *-algebra of $\mathscr{H}_{ES}$ that is non-commutative. One way to construct the von Neumann algebra of incompatible operator is to build a double commutant of bounded linear operators.
\begin{defn} A commutant of a set of operators A is defined as $A' = \{C: C\in{\mathbb{L}}^\infty(\mathscr{H}), [DC,CD]=0,\forall{D}\in{A}\}$. This is the set of bounded linear operators of the Hilbert space that commute with every member of the set A and this set need not be abelian.
\end{defn}
It can be shown that \cite{LUC2006} the double commutant of a self-adjoint set $\mathscr{S}$, that is $S\in \mathscr{S}\Rightarrow{s^*}\in\mathscr{S}$, forms a von Neumann algebra. The non-commutative quantum probabilistic space is defined below:
\begin{align*}
\mathscr{H}_{ES} & = L^2(\Omega = \cup_{j=1}^{\infty}Supp(I_j), \mathscr{A}_{ES},\rho).\\
\mathbb{A}_{ES}& =\mathscr{L}^\infty(\mathscr{H}_{ES}).
\end{align*}
For the non-commutative version the Hilbert space consists of H-interpretations of the first order language that has deterministic predicates as well quantum mechanical self adjoint operators as only Hermitian operators have spectral resolution and thus can be part of a lattice. As we cannot define a single probability space of incompatible operators we only have a von Neumann algebra and the quantum state in the definition. The Hilbert space is strictly not required in defining a quantum probability space but in this instance identifying a specific space helps develop intuition. The *-algebra $\mathscr{A}_{ES}$ of bounded linear operators will induce a lattice that is non-Boolean leading to difficulties in defining the AND operation. One way to understand the issue is in terms of joint probabilities that cannot be defined for incompatible observables as discussed in \cite{LUC2006}. In the quantum logic perspective let us discuss an example from \cite{KS} of a spin one half system.

\begin{exmp} The Hilbert space is $\mathbb{C}^2$ with the canonical basis  $e_1 = \ket{0}=\left( \begin{array}{c}  0 \\ 1 \end{array} \right)$ and $e_2 = \ket{1}=\left( \begin{array}{c}  1 \\ 0 \end{array} \right)$. We can define a lattice using the entire Hilbert space as the maximal element, null space as the minimal elements, and the projections corresponding the vectors $\ket{0}, \ket{1},E_{p'},$ and $E_{q'}$ as elements p, q, p', and q' respectively. Here, p' and q' are two vectors at an angle $\pm\theta$ to the vector $\ket{0}$. Let us also assume that the spin one half system is prepared in the pure state $\ket{0}$. The projections $E_{p'}$ and $E_{q'}$ are incompatible and the lattice as depicted in Figure 1 is non-Boolean. The join and meet binary operations of this lattice are defined as the set theoretic union and intersections of the range of the projections. Suppose, the angle $\theta$ is chosen in such a way that the probability of the projections $E_{p'}$ and $E_{q'}$ is the same as 0.9999999. We can see that the probability of their meet is zero as is clear from the Hasse diagram. Even though the individual propositions are highly probable, their joint probability is undefined. \end{exmp}

In the non-commutative versions, the canonical observables defined above can be used to construct the lattice of the quantum logic. By spectral theorem, a self-adjoint operator can be decomposed into projection operators, and the range of the operators can be used to define the operations of the lattice. The implied logical operation is defined in terms of the inclusion of the range sets. The join operation $(\vee)$ is the set theoretic union, and the meet $(\boxtimes)$ is the set theoretic intersection acting on ranges (closed subspaces). Thus, it is perfectly valid to construct Horn clauses in terms of quantum mechanical operators.

\begin {defn} Conditional expectation: Let ($\mathscr{N}, \mathbb{P}$) be a quantum probability space with $\mathscr{A}\subset\mathscr{N}$ be a commutative *-sub algebra. Then the mapping $\mathscr{P}(.|\mathscr{A}):\mathscr{A'}\rightarrow\mathscr{A}$ is a
conditional expectation if $\mathscr{P}(\mathscr{P}(B|\mathscr{A})A) = \mathbb{P}(BA) \forall{A}\in\mathscr{A}, B\in\mathscr{A'}$. It is easy to show that this definition of conditional expectation satisfies usual properties such as tower and projections. In the quauntum context conditional expectation is not always possible and whenever we can define one such operation the Horn clauses can be set set up as follow:
\begin {align*}
P[D|A] &\multimapinv \sum_i\frac{P(DA_i)}{P(A_i)}.\\
P[D|A] &\multimapinv (D\in{A'})\boxtimes\sum_i\frac{P(DA_i)}{P(A_i)}A_i\boxtimes{spec(A)}={A_i}\\
P[D|A] &\multimapinv (D\in{A'})\boxtimes\sum_i\frac{P(DA_i)}{P(A_i)}A_i, A_i\in{spec(A)}
\end {align*}
\end {defn}

\section{Quantum logic}

The quantum logic framework consists of key theorems that relate observables, quantum states to classical probability spaces obviating Born rule of calculating the probabilities from the axioms of quantum mechanics. The links between the theorems can be summarized as \cite{harding} canonical observables $\zeta$ mapping Borel sets of real line or $\eta$ discrete measure space to  projections of a separable Hilbert space $\mathcal{P}(\mathscr{H})$ to probability measures via Gleason's theorem.
\begin {align*}
\zeta & :\mathbb{B}(\mathscr{R}) & \rightarrow \text{   }\mathcal{P}(\mathscr{H});\text{    }\mu & \rightarrow {(0, 1)}.  \\
\eta & :\Sigma(\mathscr{N}) & \rightarrow \text{   }\mathcal{P}(\mathscr{H});\text{    } \mu & \rightarrow {(0, 1)}.
\end {align*}
It can be shown rigorously that every quantum mechanical observable can be decomposed into canonical observables via Hahn-Hellinger{'}s theorem \cite{HAHN} , self adjoint operators have spectral decompositions on orthogonal projections of a Hilbert space using spectral theorem for operators, and finally every probability measure on the orthomodular lattice of projections of a Hilbert space would correspond to a quantum state via Gleason's theorem. The dynamics of a quantum system can be described using Wigner's theorem \cite{VAR} that map evolutions as automorphisms of unitaries and anti-unitaries of the Hilbert space. The above arrows can be reversed to go from Schr$\ddot{o}$dinger to Heisenberg picture and along with Wigner's theorem completes the description of quantum mechanics from the perspective of quantum logic. On the left the Borel sets form a Boolean lattice while on the right the lattice of projections could possibly form a non Boolean lattice that violates the distributive law. Multiple observables, random variables, can be defined from the Borel sets or a discrete measure space to the lattice of projections. Different observables can define probability measures on the projections with different supports, that may have some overlap, on the lattice giving rise to non Boolean structure. This results in non compatible observables that can't be measured simultaneously with arbitrary precision. Canonical observables could have continuos or discrete domains and ranges in projection valued measures of a Hilbert space. Quantum mechanical harmonic oscillator is an example where this can be done that lead to observables with Poissonian as well as gaussian distributions with only one type of noise in each realization depending upon the selected basis for measurement. 
\begin {defn} An automorphism is a map $\tau:\mathcal{P}(\mathscr{H})\rightarrow{\mathcal{P}(\mathscr{H})}$ that is one-to-one and onto satisfying the following conditions:
\begin {align*}
\tau(0) & = 0. \\
\tau(1) & = 1. \\
\tau(\vee_j{E_j}) & = \vee_j\tau({E_j}). \\
\tau(\boxtimes_j{E_j}) & = \boxtimes_j\tau({E_j}). \\
\tau(1 - E) & = 1 - \tau(E).
\end {align*}
\end {defn}
These maps conserve the structures of orthomodular lattices that lead to the definition of standard logics that are used in this work.
\begin {defn} A standard logic is a system with an underlying lattice isomorphic to a lattice of closed linear manifolds of an infinite dimensional separable Hilbert space. The set inclusion operation defines the partial order of the lattice. Consequently, quantum evolutions either in Schr$\ddot{o}$dinger or Heisenberg picture, will conserve the lattice structure allowing us to use unitaries freely in the construction of Horn clauses. 
\end {defn}
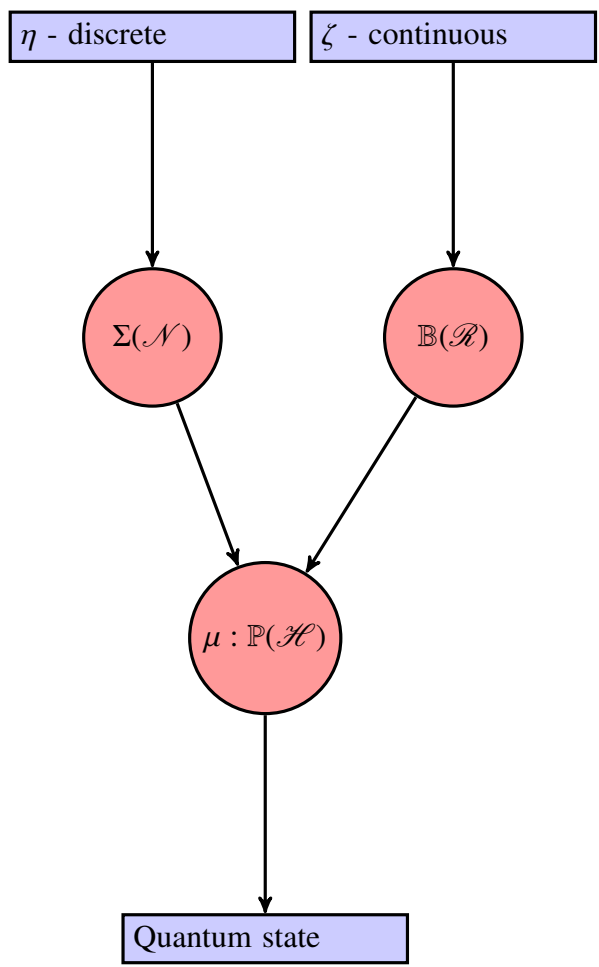
\begin {figure}
\centering
\begin{tikzpicture}[
    and/.style={and gate US,thick,draw,fill=red!60,rotate=90,
		anchor=east,xshift=-1mm},
    or/.style={or gate US,thick,draw,fill=blue!60,rotate=90,
		anchor=east,xshift=-1mm},
    be/.style={circle,thick,draw,fill=green!60,anchor=north,
		minimum width=0.2cm},
    tr/.style={buffer gate US,thick,draw,fill=purple!60,rotate=90,
		anchor=east,minimum width=0.8cm},
    label distance=5mm,
    every label/.style={blue},
    event/.style={rectangle,thick,draw,fill=yellow!20,text width=2cm,
		text centered,font=\sffamily,anchor=north},
    edge from parent/.style={very thick,draw=black!70},
    edge from parent path={(\tikzparentnode.south) -- ++(0,-1.05cm)
			-| (\tikzchildnode.north)},
    level 1/.style={sibling distance=7cm,level distance=1.4cm,
			growth parent anchor=south,nodes=event},
    level 2/.style={sibling distance=7cm},
    level 3/.style={sibling distance=6cm},
    level 4/.style={sibling distance=3cm}
    ]
   \begin{scope}[xshift=-7.5cm,yshift=-5cm,very thick,
		node distance=4.0cm,on grid,>=stealth',
		block/.style={rectangle,draw,fill=blue!20,text width=3.5cm},
		comp/.style={circle,draw,fill=red!40,minimum size=52pt}]
   \node [block] (re)					{Quantum state};
   \node [comp]	 (cb)	[above=of re]			{$\mu:{\mathbb{P}(\mathscr{H})}$}  edge [->] (re);
   \node [comp]	 (ca1)	[above=of cb,xshift=-1.5cm]	{$\Sigma(\mathscr{N})$} edge [->] (cb);
   \node [comp]	 (ca2)	[right=of ca1, xshift=0.0cm]	{      $\mathbb{B}(\mathscr{R})$      } edge [->] (cb);
   \node [block] (s1)	[above=of ca1]		{$\eta$ - discrete} edge [->] (ca1);
   \node [block] (s2)	[right=of s1]		{$\zeta$ - continuous} edge [->] (ca2);
   \end{scope}
    \end{tikzpicture}
   \caption {Construction of quantum probability a space for the harmonic oscillator that consists of both discrete and continuous random variables. $\eta$ and $\zeta$ generate two von Neumann algebras $\Sigma(\mathscr{N})$ and $\mathbb{B}(\mathscr{R})$ respectively. These in turn lead to projection operator valued measures (POVMs) with overlapping support. The resulting probability measure on the orthogonal projections of the Hilbert space defines a quantum state due to Gleason's theorem. }
  \end {figure}
Due to the interdisciplinary nature of the formalism we use to describe quantum probabilistic logic programming it can be approached in several different ways. Here, we start the discussion with an introduction to lattices and build non Boolean logics of projections on a Hilbert space. We can then construct probability measures on them that lead to quantum states via Gleason's theorem discussed later in the section on entanglement semantics. 

\textbf{Lattices}: These mathematical structures, with additional  orthocomplement and modular properties, form the basis for logic that is classical as well as quantum and so we examine a few examples of them \cite{KS}. Any partially ordered set with a unique maximal and minimal element and the two binary operations meet and join define a lattice. The events structure of a probability space is based on a lattice and other examples of a lattice include propositional logic, Borel sets of the real line, any set and the associated power set, and a multi-dimensional vector space and its subspaces.
\begin{defn} An orthocomplement of an element a of lattice $\mathscr{L}$ is an element a' of $\mathscr{L}$ such that $a\vee{a'} = 1$ and $a\boxtimes{a'} = 0$.\end{defn} 
\begin{defn} A complemented, that is every element has a compliment, lattice $\mathscr{L}$ is modular if it satisfies 
the property $c < a \Rightarrow a\boxtimes({b} \vee {c}) = (a\boxtimes{b})\vee{c}, \forall{a,b,c\in{\mathscr{L}}}$.\end{defn}
\begin{defn} Two elements p and q of the lattice $\mathscr{L}$ are co-measurable if there exist mutually orthogonal propositions a,b, and c such that $p = a\vee{b}$ and $q = a\vee{c}$.\end{defn} Two mutually orthogonal propositions are co-measurable as the element `a` can be chosen to be the zero element. Two co-measurable propositions sharing a common element are equivalent to the eigenvector shared by the projection operators. Several lattices (Boolean) of co-measurable events constitute a quantum logical system. Whenever measurements are made the system can be described by a single Boolean lattice and the tools of classical logic can be applied to study the system. When dealing with unbounded observables such as the gaussian distributed quadratures used in quantum optics the lattice can be constructed out of the Borel sets of the real line. The lattices of the logic are defined on a separable Hilbert space and so the composition of them is well defined in terms of tensor product of Hilbert spaces. 												

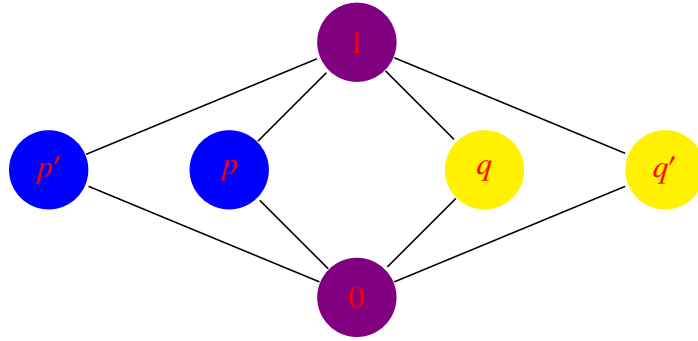
\begin{figure}
\center{
\begin{tikzpicture}[->,>=stealth',shorten >=1pt,auto,node distance=2.4cm,
                    semithick]
  \tikzstyle{every state}=[fill=blue,draw=none,text=red]

  \node [state, fill=blue]        (A)                    {$p$};
  \node[state, fill=blue!50!red]         (B) [above right of=A] {$1$};
  \node[state, fill=blue!50!red]         (D) [below right of=A] {$0$};
  \node[state, fill=yellow]         (C) [below right of=B] {$q$};
  \node[state, fill=blue]         (E) [left of=A]       {$p'$};
  \node[state, fill=yellow]         (F) [right of=C]       {$q'$};

  \path[-] (A)  edge              node {} (B)
            
        (B) edge node {} (F)
            edge              node {} (C)
        (C) edge              node {} (D)
           
        (D) edge  node {} (F)
            edge              node {} (A)
             edge              node {} (E)
        (E) edge  node {} (B);
\end{tikzpicture}
\caption{ The Hasse diagram of the logic with incompatible observables resembling a "Chinese lantern". This system is a composite of two boolean lattices with a common maximal and minimal elements.}
}
\end{figure}

 Let us start with an example of quantized Horn clauses by expressing the no-cloning theorem in our formalism.

\begin {exmp}
The no-cloning theorem \cite{ZUREK} of quantum information processing states that an unknown quantum state cannot be copied by proving the non existence of any unitary operator that would effect the cloning. Here we formulate the theorem and its proof using Horn clauses and trace the steps of a theorem prover. We use \\
\begin {align*}
\overset{2}{c}lone(\Psi, \Psi) &\multimapinv (state \overset{1}{=} \ket{\Psi}\otimes\ket{\Psi}), \overset{2}{U}(\Phi, state), state\overset{1}{=}(\ket{\Psi}\otimes\ket{\_}).\\
\alpha{\overset{2}{U}}(\ket{\Psi}\otimes\ket{\_},state) + \beta{\overset{2}{U}}(\ket{\Phi}\otimes\ket{\_},state) &\multimapinv {\overset{2}{U}(\alpha\ket{\Psi} + \beta\ket{\Phi}\otimes\ket{\_},state)}.\\
\alpha{\overset{2}{U}}(\ket{\_}\otimes\ket{\Psi},state) + \beta{\overset{2}{U}}(\ket{\_}\otimes\ket{\Phi},state) &\multimapinv {\overset{2}{U}(\ket{\_}\otimes\alpha\ket{\Psi} + \beta\ket{\Phi},state)}.\\
\overset{2}{U}(\Phi,state) &\multimapinv state\overset{1}{=} e^{\Phi}(\Phi). \\\text{   An arbitrary phase factor is added to state.}
\end {align*}

The first clause sets the rule that the clone predicate is true if a composite identical states are produced starting from a state$\Psi$ and a place holder for the second state. The second and third Horn clauses hold because of the linearity of the unitary operator U. The last clause defines the unitary operator. Suppose we have the goal:\\

$clone(\alpha\ket{0}+\beta\ket{1}, \alpha\ket{0}+\beta\ket{1})=clone(\alpha\ket{0}+\beta\ket{1},\ket{\_})$ \\

we expect it to succeed by the following resolutions:\\

L.H.S:
\begin {align*}
 state &= \alpha\ket{0}+\beta\ket{1}\otimes{\alpha\ket{0}+\beta\ket{1}}. \\
 state &= \alpha^2\ket{0}\ket{0}+\alpha\beta\ket{0}\ket{1} + \alpha\beta\ket{1}\ket{0}+\beta^2\ket{1}\ket{1}.\\
 L.H.S  &= \alpha^2{\overset{2}{U}(}\ket{0}\ket{0})+\alpha\beta{\overset{2}{U}(}\ket{0}\ket{1},state) + \alpha\beta{\overset{2}{U}(}\ket{1}\ket{0},state)+\beta^2{\overset{2}{U}(}\ket{1}\ket{1},state).
\end {align*}

On the other hand the R.H.S is

 $\overset{2}{U}(\alpha\ket{0}\ket{0},state) + \overset{2}{U}(\beta\ket{1}\ket{1},state)$.\\

This leads to a contradiction as there are no $\alpha$ and $\beta$ values that satisfy both the equations. In an actual technical implementation a theorem prover starts with a negation of the above goal and derive a contradiction which is not possible in this case.
\end {exmp}

\section{Time-discrete quantum stochastic processes}

Let $\{H_n, n\geq0\}$ be a countable sequence of Hilbert spaces with 
\begin{equation}
\{\phi_n,n\geq1,\parallel{\phi_n}\parallel=1,\phi_n\in{H_n}\}
\end {equation}
  as the stabilizing sequence of unit vectors with respect to which countable tensor products are defined. The establishing sequence is required to define the inner product of the composite space as
\begin{equation}
M=\{\underbar{u}|\underbar{u}=(u_1,u_2,\dots), u_n\in\mathscr{H}_n, u_n=\phi_n \text{   for all large n}\}
\end{equation}
\begin{equation}
K(\underbar{u},\underbar{v})=\prod_{j=1}^\infty{\langle}u_j,v_j\rangle, \underbar{u}, \underbar{v}\in{M}
\end{equation}
 This sequence preventing the inner product from growing without bounds will contain the interpretation of the state under which the evolution of the system occurs in the Heisenberg picture. It is usually taken to be the ground state of the system and we shall adopt the same convention in this work and set up compact notations for composite spaces, for example to denote the sequence of Hilbert spaces where events later than time instance n occur can be represented as: 
\begin{equation}
H_{[n+1}=H_{n+1}\otimes{H_{n+2}}\otimes{\dots}\text{, w.r.t. } \phi_{n+1},\phi_{n+2},\dots
\end {equation} 
Similarly, the sequence of past will be denoted by the subscript ] and with this notation the total Hilbert space can be split as follows:
\begin{equation}
H=H_0\otimes{H_{[1}}=H_{n]}\otimes{H_{[n+1}}
\end{equation}
 Similarly, we can define an increasing sequence of algebras from the bounded operators of the respective Hilbert spaces as follows:
\begin{equation}
B_n=B_{n]}=\{X\otimes{1_{[n+1}},X\in{B_{n]}}\} \text{   and   }B_{\infty}=B(H)
\end{equation}
The sequence of algebras $B_n$ may be thought of as information about the flow up to the point n (current state).
The conditional expectation given the current information about the flow is defined as a map $E{_n]}:B_{\infty}\rightarrow{B_{n]}}$ satisfying  
${\langle}u,E_{n]}(X)v\rangle={\langle}u{\otimes}\phi_{[n+1},Xv{\otimes}\phi_{[n+1}\rangle,u,v\in{H_{n]}} \text{  and   }X\in{B_\infty}$. It follows that
\begin{equation}
E_{n]}(X)=E_{\ket{\phi_{[n+1}}\bra{\phi_{[n+1}}}(X)\otimes{1_{[n+1}}
\end{equation}
By taking a conditional expectation we are projecting the operator X to the past, that is the algebra $B_n]$ by averaging over the future algebra $B_{[n+1}$ which is signified by the expectation with respect to the state $\ket{\phi_{[n+1}}\bra{\phi_{[n+1}}$. 
If $\rho$ is a state in $H_2$ then an operator $E_{\rho}(Z)\in{B(H_1)}$ where $Z\in{B(H_1\otimes{H_2})}$ is defined as
\begin{equation}
\langle{u},E_{\rho}(Z)v\rangle=trZ(\ket{v}\bra{u}\otimes{\rho}),\text{  u,v}\in{H_1}
\end{equation}
We now consider a quantum flow, a sequence of operators whose existence can be established using induction. The flow is induced by a homomorphism $\Theta:B(H)\rightarrow{B(H_1)}$ that is composed of a set of maps $\theta_j^i:E_{\ket{e_j}\bra{e_i}}\Theta(X)$. Here, $H_1$ is another Hilbert space with the basis, $\{e_0, e_1, \dots, e_d\}$ and we choose the stabilizing sequence $\phi_n = e_n$ for the tensor product.
\begin{prop}\cite{KP1992}
 There exists an operator valued process $J_n:B_0\rightarrow{B_{n]}}$ satisfying 
\begin{equation}
J_0(X)=X\otimes{1_{[1}},\text{    }\Theta(X)=X\otimes{1_{[2}}
\end{equation}
\begin{equation}
J_n(X) = \sum_{0\leq{i,j}\leq{d-1}}j_{n-1}(\theta_i^j(X))1_{n-1]}\otimes{\ket{e_i}\bra{e_j}}\otimes{1_{[n+1}}
\end{equation}
In addition, the following Markov property holds: 
\begin{equation}
E_{n-1]}J_n(X)=J_{n-1}(\theta_0^0(X)),\forall{X\in{B_0}},n\geq{1}
\end{equation}
\end{prop}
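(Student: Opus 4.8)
The plan is to read the displayed recursion as a \emph{definition}: $J_0(X)=X\otimes 1_{[1}$ is given, and assuming $j_{n-1}$ has been constructed, $J_n$ is \emph{defined} by $J_n(X)=\sum_{0\le i,j\le d-1}j_{n-1}(\theta_i^j(X))\otimes\ket{e_i}\bra{e_j}\otimes 1_{[n+1}$. So the substance of the statement is (i) that this recursion is well posed, i.e.\ each $J_n$ actually maps $B_0$ into $B_{n]}$ (and is a normal unital $*$-homomorphism, so that it is a genuine quantum flow), and (ii) the Markov identity. The preliminary step I would carry out first is to record the algebraic identities on the structure maps forced by $\Theta$ being a normal unital $*$-homomorphism: writing $\Theta(X)=\sum_{i,j}\theta_i^j(X)\otimes\ket{e_i}\bra{e_j}$ for the operator-matrix entries of $\Theta$ in the basis $\{e_i\}$ and comparing entries in $\Theta(1)=1$, $\Theta(X^*)=\Theta(X)^*$ and $\Theta(XY)=\Theta(X)\Theta(Y)$ (using $\ket{e_i}\bra{e_k}\ket{e_l}\bra{e_j}=\delta_{kl}\ket{e_i}\bra{e_j}$) gives $\theta_i^j(1)=\delta_{ij}1$, $\theta_i^j(X^*)=\theta_j^i(X)^*$, and $\theta_i^j(XY)=\sum_k\theta_i^k(X)\theta_k^j(Y)$, with each $\theta_i^j$ a normal completely bounded map from $B_0$ to $B_0$.

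Next I would run the induction on $n$. The base case is immediate: $J_0(X)=X\otimes 1_{[1}\in B_{0]}$ is a normal unital $*$-homomorphism. For the step, assume $j_{n-1}\colon B_0\to B_{n-1]}$ is one. Since $\theta_i^j(X)\in B_0$, the operator $j_{n-1}(\theta_i^j(X))$ lives in $B_{n-1]}$ (acting on $H_{n-1]}$), so $j_{n-1}(\theta_i^j(X))\otimes\ket{e_i}\bra{e_j}\otimes 1_{[n+1}\in B_{n]}$, and a finite sum of such operators is again in $B_{n]}$; normality and boundedness pass through the finite sum. The homomorphism properties of $J_n$ are then short computations driven by the identities above and $\ket{e_i}\bra{e_k}\ket{e_l}\bra{e_j}=\delta_{kl}\ket{e_i}\bra{e_j}$: $J_n(1)=\sum_i 1\otimes\ket{e_i}\bra{e_i}\otimes 1=1$; $J_n(X)^*=J_n(X^*)$ follows from $\theta_i^j(X^*)=\theta_j^i(X)^*$ and $(\ket{e_i}\bra{e_j})^*=\ket{e_j}\bra{e_i}$ after relabelling; and $J_n(X)J_n(Y)=J_n(XY)$ follows by multiplying out, collapsing the matrix units, using multiplicativity of $j_{n-1}$, and recognising $\sum_k\theta_i^k(X)\theta_k^j(Y)=\theta_i^j(XY)$. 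This establishes existence of the process with the stated recursive form.

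For the Markov property I would use the concrete description of the conditional expectations recalled in the appendix: $E_{n-1]}$ acts as $E_{\ket{\phi_{[n}}\bra{\phi_{[n}}}(\cdot)\otimes 1_{[n}$, and the stabilizing vector in each factor $H_m$, $m\ge 1$, is the ground vector $e_0$. Applying this term by term to $J_n(X)$ and using $\langle e_0,(\ket{e_i}\bra{e_j})e_0\rangle=\delta_{i0}\delta_{j0}$ on the $H_n$ factor together with $E(1_{[n+1})=1$ on the tail, every term with $(i,j)\ne(0,0)$ is annihilated and one is left with $E_{n-1]}J_n(X)=j_{n-1}(\theta_0^0(X))$, which is the claim. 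Iterating this together with the tower property gives, more generally, $E_{m]}J_n(X)=j_m\big((\theta_0^0)^{\,n-m}(X)\big)$ for $0\le m\le n$; this is the compatibility that lets one pass, if desired, to an inductive-limit flow on $B_\infty$, and it exhibits $\theta_0^0$ as the one-step ``coherent'' transition map whose iterates give the Heisenberg evolution of the walk.

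I expect the algebra above to be routine once the structure-map identities are isolated, so the real work is in the ambient framework rather than the formula. Concretely, one needs to know that the countable tensor product $H=\bigotimes_{m\ge 0}H_m$ taken along the stabilizing sequence $\{\phi_m\}$ is a well-defined (and, for the standard-logic setting of the preceding theorem, separable) Hilbert space, and that the filtration $\{B_{n]}\}$ together with the conditional expectations $E_{n-1]}$ exist with the normality, bimodule and tower properties used above; if one wants the single process $j_\infty$ on $B_\infty$ rather than each $J_n$ individually, one also needs the consistency $E_{m]}J_n=J_m\circ(\theta_0^0)^{n-m}$ just noted so that the inductive limit exists. These are exactly the stabilizing-sequence constructions recalled in the appendix following \cite{KP1992}, which I would invoke, treating the recursion and the two displayed identities as the parts to verify in detail.
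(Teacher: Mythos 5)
Your proposal is correct, and it fills in exactly the argument the paper only gestures at: the paper states this proposition as a cited result from \cite{KP1992}, remarking earlier that the flow's "existence can be established using induction" but giving no proof, and your induction (structure-map identities $\theta_i^j(1)=\delta_{ij}1$, $\theta_i^j(X^*)=\theta_j^i(X)^*$, $\theta_i^j(XY)=\sum_k\theta_i^k(X)\theta_k^j(Y)$, followed by term-by-term evaluation of $E_{n-1]}$ against the stabilizing vector) is precisely that standard argument. Your reading of the stabilizing sequence as the constant ground vector $e_0$ is the right one — it is what the Markov identity $E_{n-1]}J_n(X)=J_{n-1}(\theta_0^0(X))$ and the state $\rho_0\otimes\ket{e_0}\bra{e_0}\otimes\cdots$ used in the quantum-walk section require, despite the paper's stray "$\phi_n=e_n$".
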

The map $\theta_0^0$ encodes the classical states, and the transition probabilities of a Markov chain and the Hilbert space $H_1$ play the role of the coin space in the usual formulations of quantum walks \cite{SV}. Furthermore, the $\theta_j^i$ can be shown to be involutive, unital, linear, and positive maps \cite{KP1992}. 

Let us now look at a process to construct a quantum stochastic flow from a classical Markov chain with the transition probability matrix $T  = ( p_{ij} )$ \cite{KP1992}: 
The classical Markov chain is defined on a probability space $(S,F,\mu)$ where S is a finite set of cardinality d and $\phi_i:S\rightarrow{S} $ are measurable maps. These measurable functions provide flexibility in defining the transition probability via multiple maps or degrees of freedom. In the quantum analogue, the classical states are defined on the Hilbert space $ H_0 = L^2(\mu)$, the space of all square integrable functions w.r.t $\mu, B_0 = L^\infty(\mu)$ the space of bounded measurable functions, and $H_1 = C^d$ (d-dimensional complex space), and $B_1 = B(H_1)$. We can construct a homomorphism to induce the flow as follows:

\begin{equation}  \label{eq: theta}
\Theta(f)=((\theta^0_0(f)))=U\left( \begin{array}{ccc}
             f\circ\phi_0 & { }  & 0\\
             {} & f\circ\phi_1 &{}\\
             0  & {} & \ddots{f}\circ\phi_{d-1}
              \end{array} \right)U^*
\end{equation}\\

\begin{equation} \label{eq: unitary}
U=\left( \begin{array}{cccc}
             p_0^{1/2} & p_1^{1/2 }  & \dots &p_{d-1}^{1/2}\\
            -p_1^{1/2}  &{ }  &{ }  &{ }\\
             \vdots &1  &-  &Q\\
            -p_{d-1}^{1/2}  &{ } &{ } &{ }
             \end{array} \right) \text{, where   } Q=((q_{ij})),q_{ij}=(p_{i}p_j)^{1/2}(1+p_0^{1/2}), i,j\geq{1}
\end{equation}

\textbf{Composite spaces:} In classical probability, composite spaces are represented by product spaces, and, in the case of QP, the tensor product is used to combine the systems. Suppose,  $(\mathscr{H}_1,\mathbb{A}_1,\mathbb{P}_1)$ and  $(\mathscr{H}_2,\mathbb{A}_2, \mathbb{P}_2)$ are two quantum probability spaces.Then the total system is represented by  $(\mathscr{H}_1\otimes\mathscr{H}_2,\mathbb{A}_1\otimes\mathbb{A}_2, \mathbb{P}_1\otimes\mathbb{P}_2)$.
Let us develop some intuition on tensor products of Hilbert spaces in statistical terms. The essential ingredient to construct a Hilbert space that is a tensor product is to define the inner product of the new space which leads to the following definition and a very important result \cite{KP1992}:

\textbf{Positive definite kernel:} Let $\mathscr{H}$ be any set, possibly countably infinite. A positive definite kernel on $\mathscr{H}$ is a complex number valued map $K:\mathscr{H}\times$$\mathscr{H}\rightarrow$$\mathscr{C}$ satisfying \\
\begin{equation}
\sum\limits_{i,j}\bar{\alpha_i}\alpha_j{K(x_i,x_j)}\geq{0}, \alpha_i\in\mathscr{C},x_i\in\mathscr{H}.
\end{equation}
The inner product of a Hilbert space is an example of a positive semidefinite kernel. Given an arbitrary positive semidefinite kernel K, we can construct a Hilbert space and its inner product in terms of K \cite{KP1992}. By exploiting the fact that the product of two kernels (matrices) is also kernel, and that an n-dimensional kernel can be represented by the covariance matrix of 2n identically distributed Gaussian random variables, we can see that the inner product of the composite system is the covariance of random variables. This classical correlation, Gaussian covariance structure, is violated when a composite quantum system is entangled. The statistical interpretation of a tensor product is rooted in the Gaussian measure of the composite space. The extension of this to infinite tensor products is enabled by Kolmogorov's consistency theorem. That is, all finite subsets of $\mathscr{H}$ give rise to Gaussian measures satisfying the consistency conditions. Hence, the existence of a family of Gaussian measures indexed by the members of $\mathscr{H}$ is guaranteed. In other words, the inner product of the countable tensor product of Hilbert spaces is well defined and is given by the following definition:\\
\begin{defn} Let $\mathscr{H}_n,\phi_n$ be a sequence of Hilbert spaces and unit vectors such that $\phi_n\in{\mathscr{H}_n},\parallel{\phi_n}\parallel=1$. The inner product of the composite space is defined w.r.t this stabilizing sequence as
\begin{equation}
M=\{\underbar{u}|\underbar{u}=(u_1,u_2,\dots), u_j\in\mathscr{H}, u_n=\phi_n \text{   for all large n}\}.
\end{equation}\end{defn}
\begin{equation}
K(\underbar{u},\underbar{v})=\prod_{j=1}^\infty{\langle}u_j,v_j\rangle, \underbar{u}, \underbar{v}\in{M}.
\end{equation}
The requirement for the Kolmogorov consistency theorem manifests as the stabilizing sequence in defining the inner product. It is important to note that the countable tensor product of Hilbert spaces is not a separable Hilbert space.

\section {Quantum Stochastic Processes}
The stochastic processes in the quantum context in this work are Poisson processes and a pair of conjugate Brownian motions that form classes of non commuting Hermitian operators.
\begin {defn} Poisson process on symmetric Boson Fock space: Fock space is a Hilbert space as defined below as the set of square integrable functions on a space with respect to a measure that assigns equal probabilities to jump times $t_1, t_2, \dots, t_n$ as would be expected of a poisson process.  
\begin {align*}
\Omega &= \cup_n{\Omega_n}; \\
\Omega_0 &= \emptyset; \\
\Omega_n &= \{t_1,t_2,\dots,t_n\}; t_1 < t_2 < \dots < t_n\in{[0, T]}.
\end {align*}
\begin {equation*}
P_n(\Omega_n) = \frac{e^{-T}T^n}{n!}.
\end {equation*}
\begin {equation*} \label {FockEq}
\mathscr{H} = L^2\big(\omega,\mathscr{F},\rho\big).
\end {equation*}
\begin {equation*}
\mathscr{W} = B(\mathscr{H}). \text {Bounded linear operators}
\end {equation*}
This space has a continuous tensor product structure as shown below that facilitates defining time-continuous stochastic processes.
\begin {align*}
\Omega_{[s,t]} &= \Omega_{s]}\otimes\Omega_{[s,t]}\otimes\Omega_{[t}, s,t\in[0, T].\\
\mathscr{F}_{[s,t]} &= \mathscr{F}_{s]}\otimes\mathscr{F}_{[s,t]}\otimes\mathscr{F}_{[t}.\\
\mathscr{W}_{[s,t]} &= \mathscr{W}_{s]}\otimes\mathscr{W}_{[s,t]}\otimes\mathscr{W}_{[t}.
\end {align*}
Let us now define special vectors called coherent vectors that also factorize continuously in time and their inner product as:
\begin {align*}
e(f)(\emptyset) &= 1. \\
e(f)(\tau) &= \prod_{t\in\tau}^\infty {f(t)}; f\in{L^\infty([0,T])}. \\
\langle{e(f)}, e(g)\rangle &= e^{\norm{f}^2_2-T}.\\
e(f) &= e(f_{s]})\otimes{e(f_{[s,t]})}\otimes{e(f_{[t})}.
\end {align*}
\end {defn}
Let us denote by D the linear span of the exponential vectors that form the domain of the operators of the stochastic processes.
It is enough to define operators with domain as D as it is dense in the Hilbert space the operations are uniquely defined. The Hilbert space we have constructed is on a classical probability space and so we can define Poisson random variables. The random variable $N_t(\tau) = \abs{\tau\cap[0,t]}$ counts the number of jumps up to time t and it is a Poisson process with unit rate under the probability measure P. An operator process can be constructed from this as follows:
\begin {equation}
(\Lambda_t\Psi)(\tau) = N_t(\tau) = \abs{\tau\cap[0,t]}\psi(\tau), \Psi\in\mathscr{F}, \tau\in\Omega, t\in[0,T].
\end {equation}
We can now define a family of states called coherent states as $\mathbb{P}_f(X) = \langle{e(f)}, Xe(f)\rangle^{T-\norm{f}^2_2}$.
The physical intuition is clear as these states describe the coherent states of quantum optics. We will designate $\mathbb{P}_0=\emptyset$ as the vacuum state and $e(0) = \Phi$ as the vacuum vector. The process $\{\Lambda_t\}$ defined is called the gauge process which is commutative and later after defining the Brownian motions we will construct a non commutative version that will form part of the quantum noises. We have ignored technicalities such as affiliated process that are analogous to adapted processes in classical system and refer the readers to \cite {LUC2006} for details.

\begin {defn} Brownian motions: We continue to work with the Fock space defined on a continuum as above and define a Weyl operator as
\begin {equation}
W(f)e(g) = e^{-\int_0^T{(f^\dag(t)g(t) + \frac{1}{2}f^\dag(t)f(t))dt}} e(f + g) = e^{-\langle{f,g}\rangle - \norm{f}^2_2}e(f + g).
\end {equation} 
\end {defn}
This is a unitary operator that may be considered as a second quantized on the Fock space and can be shown to posses the continuous tensor property. Let us fix $f\in{L^\infty([0, T])}$ and construct the group of unitary operators $\{W(tf)\}_{t\in{R}}$ and by Stone's theorem guarantee's a self adjoint operator B(f) such that $W(kf) = e^{kB(f)}$. This is similar to constructing a unitary out of a Hamiltonian and in this context these operators are called field operators. It is easy to establish that the probability distribution of these random variables is gaussian in the coherent state $\mathbb{P}_g$. Like in the case of the gauge process we can construct another operator process as $\{B^\phi_t = B(e^{i\phi}\chi[0,T]): t\in[0, T]$ for some fixed real function $\phi\in{L^\infty}([0,T])$ that has Gaussian probability law at each time epoch. Now, we can define the following pair of conjugate Brownian motions A and $A^\dag$:
\begin {align*}
Q_t &= B(i\chi[0,T]). \\
P_t  &= B(-\chi[0,T). \\
A_t &= \frac{(Q_t + iP_t)}{2}. \\
A^\dag_t &= \frac{(Q_t - iP_t)}{2}.
\end {align*}
The process $\Lambda_t$ vanishes in vacuum coherent state and so let us define the process $\Lambda_t(f) = W(f)^\dag\Lambda_t{W(f)}$ that has same statistics in state $\mathbb{P}_f$ as $\Lambda_t$ in vacuum state. The sandwiching with a Weyl operator (defined below) provides a way to transform statistics of a coherent state to vacuum which could be the choice to carry out all the analysis. The three processes $(A^\phi_t, A^{\dag\phi}, \Lambda^\phi_t)$ are called the quantum noises that can be used to describe the dynamics of an open quantum system such as a systems coupled to a Bosonic bath ($\text{\color{blue}Figure \ref {fig:OpenSystem}}$). These quantum noises are mathematical objects called white noises that are good approximations to wide band noises encountered in quantum optics \cite{Accardi1995}, \cite{Gough2005}. The dynamics of an open system interacting with an environment can be described by a quantum stochastic differential equation (QSDE) of Hudson-Parthasarathy kind \cite{KP1992} as
\begin {equation} \label {eq:QSDE}
dU_t = \{({\color{red}S} - 1)d\Lambda + {\color{blue}L}dA^{\dag}_t  - {\color{red}S}{\color{blue}L}^{\dag}dA_t + (i{\color{green}H}-\frac{1}{2}{\color{blue}L^{\dag}L})dt \}U_t.
\end {equation}
In the above equation, the unitary operator U is defined on the combined system and the Fock space described by the \text{\color{blue} Equation \eqref {FockEq}}. The operator L, it is actually a vector with one element per noise channel, and its conjugate are the Lindbladians corresponding to the channels of decoherence, and the operator S is a similar noise channel that is of discrete in time. When the above equation is traced out with respect to the bath we obtain the quantum master equation with the operator S missing for the obvious reasons.  The three parameters S,L, and H of the QSDE characterize the open evolution of a system that are coefficients of Hudson-Parthasarathy quantum stochastic differential equations denoting the internal energy of the system in terms of the Hamiltonian H, couplings to the environment  via the Lindbladians, and the scattering by the fields by the S matrix. We refer the readers to the work of Gough and James \cite{Gough2008} for the details of the SLH mathematical framework that derives the QSDE for composite systems connected in a network. A computational implementation of the SLH framework can be found at the work of Tezak and Mabuchi \cite{Tezak2012} that has a user friendly interface to connect standard components in quantum optics such as beam splitters, optical cavities, and optical parametric oscillators to mention a few. The computational framework QNET provides numerical tools based on QuTip \cite {Nori2013} to solve the QSDEs constructed out of networked components of quantum optics. We will construct two examples based on Jaynes-Cummins models on the quantum probability space ($\mathscr{H}_0\otimes\mathscr{H}, \mathscr{B}(\mathscr{H}_0)\otimes\mathscr{W}, \rho_0\otimes\Omega$), where $\Omega$ is the vacuum state of the Fock space and $\rho_0$ is the initial state of the system, derive the QSDE \text{\color{blue} Equation \eqref{eq:QSDE}} for the combined system in Heisenberg picture and quantify the entanglement that results from the two connected cavities using concurrence.  

\begin {defn} Weyl operators: The three second quantized operators $(A^\phi, A^{\dag\phi}, \Lambda^\phi)$ are members of a representation of the Euclidian group over $\mathscr{H}$ \cite {KP1992} whose generic form is:
\begin {align*}
W(u, U)e(v) = \{exp(-\frac{1}{2}\norm{u}^2)-\langle{u},Uv\rangle\}e(Uv + u), \forall{v}\in\mathscr{H}.
\end {align*} 
as we will see later these Weyl operators used to glue together components of quantum optical circuits. 
\end {defn}
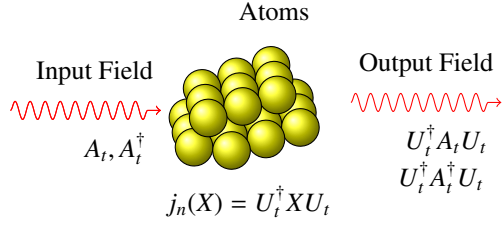
\begin{figure}
\centering
\begin{tikzpicture}
\begin{scope}[yshift=-180,yslant=0.5,xslant=-1]
\end{scope}

\begin{scope}[rotate around = {-5:(0,0,0)}]
    \begin{scope}[decoration={snake,amplitude=1.2mm,
        segment length=2mm,post length=1mm}] 
      \draw[decorate,red,->] (-1.5,-3) -- ++(5:2);
      \draw (-1.5,-2.5) ++(5:2) node[left] {\footnotesize Input Field};
      \draw (-1.5,-3.5) ++(5:2) node[left] {\footnotesize $A_t,A^{\dag}_t$};
       \draw[decorate,red,->] (-1.5,-3) -- ++(5:2);
       \draw[decorate,red,->] (3.0,-2.5) -- ++(5:2);
      \draw (3.0,-2.0) ++(5:2) node[left] {\footnotesize Output Field};
      \draw (3.0,-3.0) ++(5:2) node[left] {\footnotesize $U^{\dag}_tA_t{U_t}$};
       \draw (3.0,-3.5) ++(5:2) node[left] {\footnotesize $U^{\dag}_tA^{\dag}_t{U_t}$};
    \end{scope}
    
    \draw (0.5,-1.5) ++(5:2) node[left] {\footnotesize Atoms};
    \draw (1.0,-4.0) ++(5:2) node[left] {\footnotesize $j_n(X)=U^{\dag}_tXU_t$};
    \foreach \x  in {6.75,7.25}
        \shadedraw [ball color=yellow] (\x,2.5,13) circle (0.25cm);
    \foreach \x  in {6.5,7,7.5}
        \shadedraw [ball color=yellow] (\x,2.5,13.5) circle (0.25cm);
    \foreach \x  in {6.25,6.75,7.25,7.75}
        \shadedraw [ball color=yellow] (\x,2.5,14) circle (0.25cm);
    \foreach \x  in {6.5,7,7.5}
        \shadedraw [ball color=yellow] (\x,2.5,14.5) circle (0.25cm);
    \foreach \x  in {6.75,7.25}
        \shadedraw [ball color=yellow] (\x,2.5,15) circle (0.25);

    \foreach \x  in {7} 
        \shadedraw [ball color=yellow] (\x,3,13.25) circle (0.25cm);
    \foreach \x  in {6.75,7.25}
        \shadedraw [ball color=yellow] (\x,3,13.75) circle (0.25cm);
    \foreach \x  in {6.5,7,7.5}
        \shadedraw [ball color=yellow] (\x,3,14.25) circle (0.25cm);
    \foreach \x  in {6.75,7.25}
        \shadedraw [ball color=yellow] (\x,3,14.75) circle (0.25cm);
    \foreach \x  in {7}
        \shadedraw [ball color=yellow] (\x,3,15.25) circle (0.25);
\end{scope}
\end{tikzpicture}
\caption{\label{fig:OpenSystem} Open system of a cluster of atoms interacting with an optical field. The input field described by quantum noises $A_t,A^{\dag}_t$ and the output operators after the interactions with the atoms expressed as an unitary evolution. X denotes observables of the system, atomic cluster in this case, couple to to the environment evolving by a unitary satisfying a QSDE.}
\end{figure}

Let us consider an example of a system that is driven by quantum noises and describe its dynamics in the language of quantum probability. Our physical system consists of two Jaynes-Cummins systems connected in series with the first cavity driven by a laser beam. The connections between the J-C units are quantized optical fields that may be thought of as quantum noises described above. We will derive a quantum master equation for this system,assumed to be in an initial X-state = $\begin{bmatrix} \rho_{11} & 0 & 0 & \rho_{14} \\ 0 & \rho_{22} & \rho_{23} & 0 \\ & \rho_{32} & \rho_{33} & 0 \\ \rho_{41} & 0 & 0 & \rho_{44} \end{bmatrix}$, and estimate the entanglement between the atoms in the two cavities and quantify it using the concurrence measure. A driven J-C system can be formulated in the language of well known input-output theory \cite {Gardiner1985} and the composite systems can be studied using its extension SLH framework. in this formalism, a quantum system interacting with an environment, quantum noises, can be characterized by three parameters, S - scattering matrix (discrete Poissonian noise) that connects input and output discrete channels like a generalized beam splitter.  The L operators are the Lindbladians that describe noise channels, H - the Hamiltonian describing the coherent evolution of the system.   
Let SLH for a single J-C system be described by the following parameters with two decoherence channels: 
\begin {align*}
S &= \mathbb{I}. &\text{ \color{red}  The identity matrix.} \\
L_1 &= \sqrt{\kappa}\times{a}. &\text{  \color{red} Channel for decay of the cavity through mirror} \\
 & &\text{ \color{red} where a is the annihilation operator}.\\
L_2 &= \sqrt{\gamma}\sigma. &\text{  \color{red} Atomic decay due to spontaneous} \\
& &\text{ \color{red} emission into outside modes.} \\
H     &= \Delta\times\sigma^{\dag}\otimes\sigma.   &\text{ \color{red} Detuning from atomic resonance} \\
       & + \Theta\times{a}^\dag\otimes{a}.  &\text{  \color{red} Detuning from cavity resonance} \\
       & + i\times{g}\times(\sigma\otimes{a}^\dag - \sigma^\dag\otimes{a}). &\text{  \color{red} Atom-mode coupling, i = sqrt(-1)}.
\end {align*}
We want to connect two such J-C systems in cascade by feeding one of the outputs of the first cavity to that of the second and drive the first one by a laser with strength $\alpha$ that has the SLH parameters $\left(S= \begin{pmatrix} 1 & 0 & 0 \\ 0 & 1 & 0 \\ 0 & 0 & 1 \end{pmatrix}, L = \begin{pmatrix} \alpha \\ 0 \\ 0\end{pmatrix}, H = 0\right)$ as depicted in \text {\color{blue} Figure \ref {fig:JC-C-O}}.

\begin{figure} \label {fig:JC-C-O}
\includegraphics[width=\columnwidth]{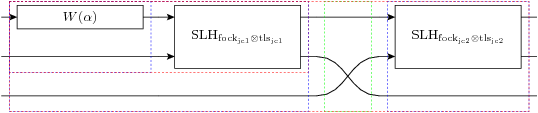}
\caption{\label{fig:JC-C-O} Cascade of two Jaynes-Cummins open systems driven by a laser input.}
\end{figure}
Then, the combined system can be described by the SLH parameters:

 \begin {equation}
 S = \begin{pmatrix} {\rm 1} & {\rm 0} & {\rm 0} \\ {\rm 0} & {\rm 0} & {\rm 1} \\ {\rm 0} & {\rm 1} & {\rm 0}\end{pmatrix}.
 \end {equation}
 \begin {equation}
 L = \begin{pmatrix}  \alpha +  \sqrt{\kappa} {a_{{{\rm fock}}_{{\rm jc1}}}} +  \sqrt{\kappa} {a_{{{\rm fock}}_{{\rm jc2}}}} \\  \sqrt{\gamma} {\sigma_{{\rm g},{\rm e}}^{{{\rm tls}}_{{\rm jc2}}}} \\  \sqrt{\gamma} {\sigma_{{\rm g},{\rm e}}^{{{\rm tls}}_{{\rm jc1}}}}\end{pmatrix}.  
\end {equation} 
 \begin {align*}
 H &= \Delta {\Pi_{{\rm e}}^{{{\rm tls}}_{{\rm jc1}}}} +  \Delta {\Pi_{{\rm e}}^{{{\rm tls}}_{{\rm jc2}}}} + i g \left( {a_{{{\rm fock}}_{{\rm jc1}}}^\dagger} {\sigma_{{\rm g},{\rm e}}^{{{\rm tls}}_{{\rm jc1}}}} -  {a_{{{\rm fock}}_{{\rm jc1}}}} {\sigma_{{\rm e},{\rm g}}^{{{\rm tls}}_{{\rm jc1}}}}\right) +  \\
 & i g \left( {a_{{{\rm fock}}_{{\rm jc2}}}^\dagger} {\sigma_{{\rm g},{\rm e}}^{{{\rm tls}}_{{\rm jc2}}}} -  {a_{{{\rm fock}}_{{\rm jc2}}}} {\sigma_{{\rm e},{\rm g}}^{{{\rm tls}}_{{\rm jc2}}}}\right) +  \\
 & \frac{i}{2} \left( - \alpha \sqrt{\kappa} {a_{{{\rm fock}}_{{\rm jc1}}}^\dagger} +  \sqrt{\kappa} \overline{\alpha} {a_{{{\rm fock}}_{{\rm jc1}}}}\right) +  \frac{i}{2} \left( - \sqrt{\kappa}  \left( \alpha +  \sqrt{\kappa} {a_{{{\rm fock}}_{{\rm jc1}}}}\right)  {a_{{{\rm fock}}_{{\rm jc2}}}^\dagger} +  \sqrt{\kappa}  \left( \overline{\alpha} +  \sqrt{\kappa} {a_{{{\rm fock}}_{{\rm jc1}}}^\dagger}\right)  {a_{{{\rm fock}}_{{\rm jc2}}}}\right) +  \\ 
 & \Theta  {a_{{{\rm fock}}_{{\rm jc1}}}^\dagger} {a_{{{\rm fock}}_{{\rm jc1}}}} + \Theta  {a_{{{\rm fock}}_{{\rm jc2}}}^\dagger} {a_{{{\rm fock}}_{{\rm jc2}}}} 
 \end {align*}
In the above, the scattering matrix S is no longer the identity owing to the channel crossing at the second stage of the circuit. This crossing is required as we want to feed only one output of the first cavity, the field that interacted with its atom, to the second J-C system. The three noise channels of the combined system is derived using circuit algebra that contains terms for decay through the mirrors combined and the two spontaneous atomic decays fed into the environment as shown in the figure ($\text{\color{blue}Figure \ref {fig:JC-C-O}}$). The coherent evolution of the composite system is described by a Hamiltonian that simply adds the atomic detuning resonance and cavity detuning resonance terms. The atom and cavity field mode coupling terms of individual cavities are added up in the composite system. The extra terms in the Hamiltonian account for the interaction of the input laser with the field mode of the first cavity and the output photons of the first J-C system interacting with the field modes of the second cavity. In principle the equation of motion of the composite system can be derived manually and the QNET \cite {Tezak2012} software automates this process. We will discuss another example of a composite system that involves a feedback loop where the software clearly provides an advantage in deriving the equations.

Let us simplify this system by making reasonable approximations based on adiabatic elimination \cite {Luc2008} where the bath excitations are removed in the limit of strong coupling between the system and the environment. 

\begin {defn} Adiabatic elimination: In the limit of strong coupling the excitations are removed from the system's description and the cavity field is directly coupled to the bath fields as can be seen from the modified channel vector below in equation \eqref{eqLsOp-1} and the Hamiltonian where the $\Theta$ terms representing the cavity detuning are missing. This approximation is applicable to a system strongly coupled to low temperature environment such as the electromagnetic field \cite {Gardiner1984}.
\end {defn}
\begin {equation} \label {eq:QSDEgen}
dU_t^k = \{({\color{red}S_{ij}^k} - \delta_{ij})d\Lambda_t^{ij} + {\color{blue}L_i^k}dA^{i\dag}_t  - {\color{red}S_{ij}^k}{\color{blue}L_i}^{k\dag}dA_t + K^kdt \}U_t.
\end {equation}
There are four assumptions on \text{\color{blue} Equation \eqref{eq:QSDEgen}}, which is a generalized form of \text{\color{blue} Equation \eqref{eq:QSDE}},  required for applying the adiabatic elimination model reduction method and we list them now.
\begin {defn} Assumption 1: $K^k + K^{k\dag} = L^{k\dag}L^k$; $S_{il}^k S_{jl}^{k\dag} = \delta_{ij}\mathbb{I}$; $S_{li}^{k\dag} S_{lj}^k = \delta_{ij}\mathbb{I}$;
\end {defn}
\begin {defn} Assumption 2, existence of operators independent of k: $K^k = k^2{Y} + kA + B$; $L_i^k = kF_i + G_i$; $S_{lj}^k = W_{ij}$;
\end {defn}
\begin {defn} Assumption 3, $P_0 = \mathbb{I} - P_1$, are projections onto ground and excited states: $P_1 Y_1^{-1} = Y_1^{-1}P_1$; $YY_1 P_1 Z P_0 = P_1 Z P_0$; $P_0 XP_1  Y_1^{-1}Y= P_0 XP_1$;
\end {defn}
\begin {defn} Assumption 4: $P_1 L_i = P_1 S_{ij} = 0$;
\end {defn}
For a more detailed treatment of this approximation and the assumptions involved we refer the reader to the work of Bouten et al. It is important to note that the adiabatic approximations are applied to individual cavities and the composite system is constructed out of the two limiting systems. The resulting quantum stochastic differential equation has the following SLH parameters:\cite {Luc2008}.
 \begin {equation}
  S = \begin{pmatrix} 1 & 0 & 0 \\ 0 & 0 & 1 \\ 0 & 1 & 0\end{pmatrix}.
  \end {equation}
 \begin {equation} \label {eqLsOp-1}
 L = \begin{pmatrix}  \alpha -  \frac{2 g}{\sqrt{\kappa}} {\sigma_{{\rm g},{\rm e}}^{{{\rm tls}}_{{\rm jc1}}}} +  \frac{2 g}{\sqrt{\kappa}} {\sigma_{{\rm g},{\rm e}}^{{{\rm tls}}_{{\rm jc2}}}} \\  \sqrt{\gamma} {\sigma_{{\rm g},{\rm e}}^{{{\rm tls}}_{{\rm jc2}}}} \\  \sqrt{\gamma} {\sigma_{{\rm g},{\rm e}}^{{{\rm tls}}_{{\rm jc1}}}}\end{pmatrix}.
 \end {equation}
 \begin {align*}
 H &= \Delta {\Pi_{{\rm e}}^{{{\rm tls}}_{{\rm jc1}}}} +  \frac{i \alpha}{\sqrt{\kappa}} g {\sigma_{{\rm e},{\rm g}}^{{{\rm tls}}_{{\rm jc1}}}} -  \frac{i g}{\sqrt{\kappa}} \overline{\alpha} {\sigma_{{\rm g},{\rm e}}^{{{\rm tls}}_{{\rm jc1}}}} + \\
 &  \Delta {\Pi_{{\rm e}}^{{{\rm tls}}_{{\rm jc2}}}} -  \frac{i \alpha}{\sqrt{\kappa}} g {\sigma_{{\rm e},{\rm g}}^{{{\rm tls}}_{{\rm jc2}}}} +  \frac{i g}{\sqrt{\kappa}} \overline{\alpha} {\sigma_{{\rm g},{\rm e}}^{{{\rm tls}}_{{\rm jc2}}}} -  \frac{2 i}{\kappa} g^{2}  {\sigma_{{\rm e},{\rm g}}^{{{\rm tls}}_{{\rm jc1}}}} {\sigma_{{\rm g},{\rm e}}^{{{\rm tls}}_{{\rm jc2}}}} +  \frac{2 i}{\kappa} g^{2}  {\sigma_{{\rm g},{\rm e}}^{{{\rm tls}}_{{\rm jc1}}}} {\sigma_{{\rm e},{\rm g}}^{{{\rm tls}}_{{\rm jc2}}}} 
 \end {align*}
Once we have the QSDE for the system we solve the equation numerically using QuTip package \cite {Nori2013} and here we provide the entanglement that is generated between the atoms of the two J-C systems, for the initial state of $\ket{ee}$, expressed in concurrence ($\text{\color{blue}Figure \ref {fig:JC-OP-ee}}$). In this configuration there is entanglement between the atomic degrees of freedom only when the atom in the first cavity is in excited state and not in ground state ($\text{\color{blue}Figure \ref {fig:JC-OP-ee}}$), the data for the other two initial states not shown.

\begin{figure}
\includegraphics[width=\columnwidth]{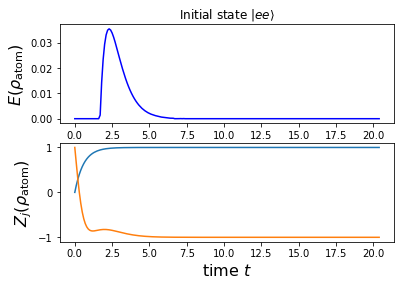}
\caption{\label{fig:JC-OP-ee} Entanglement between the atoms of the two cavities expressed in terms of concurrence for the initial state of $\ket{ee}$.}
\end{figure}
\bibliographystyle{abbrvnat}


\end{document}